\newtheorem{thm}{Theorem}[section]
\newtheorem{cor}[thm]{Corollary}
\newtheorem{lem}[thm]{Lemma}
\newtheorem{prop}[thm]{Proposition}
\theoremstyle{definition}
\newtheorem{defn}[thm]{Definition}
\theoremstyle{remark}
\newtheorem{rem}[thm]{Remark}
\newcommand{\R}{\mathbb{R}}
\newcommand{\C}{\mathbb{C}}
\newcommand{\N}{\mathbb{N}}
\newcommand{\ind}{\bm{1}}
\newcommand{\ud}{\mathrm{d}}
\newcommand{\im}{\ensuremath{\mathsf{i}}}
\renewcommand{\Re}{\mathrm{Re}}
\newcommand{\be}{\begin{equation}}
\newcommand{\ee}{\end{equation}}
\numberwithin{equation}{section}
\newcommand{\cD}{\mathcal{D}}
\newcommand{\cE}{\mathcal{E}}
\newcommand{\cF}{\mathcal{F}}
\newcommand{\cK}{\mathcal{K}}
\newcommand{\cN}{\mathcal{N}}	
\newcommand{\cP}{\mathcal{P}}
\newcommand{\cT}{\mathcal{T}}
\newcommand{\cU}{\mathcal{U}}
\newcommand{\cV}{\mathcal{V}}
\newcommand{\cW}{\mathcal{W}}
\newcommand{\FF}{\mathbb{F}}
\newcommand{\EE}{\mathbb{E}}
\newcommand{\QQ}{\mathbb{Q}}
\newcommand{\tildeN}{\widetilde{N}}
\newcommand{\uCBITCL}{\mathrm{CBITCL}}
\newcommand{\uUSD}{\mathrm{USD}}
\newcommand{\uEUR}{\mathrm{EUR}}
\newcommand{\uJPY}{\mathrm{JPY}}
\newcommand{\horizon}{\cT} 
\newcommand{\dbra}[1]{[\kern-0.15em[ #1 ]\kern-0.15em]}
\newcommand{\dbraco}[1]{[\kern-0.15em[ #1 [\kern-0.15em[}
\newcommand{\dbraoc}[1]{]\kern-0.15em] #1 ]\kern-0.15em]}
\newcommand{\dbraoo}[1]{]\kern-0.15em] #1 [\kern-0.15em[}
\title{CBI-time-changed L\'evy processes for multi-currency modeling}
\author[C. Fontana]{Claudio Fontana}
\address{Department of Mathematics ``Tullio Levi Civita'', University of Padova (Italy)}
\email{fontana@math.unipd.it}
\author[A. Gnoatto]{Alessandro Gnoatto}
\address{Department of Economics, University of Verona (Italy)}
\email{alessandro.gnoatto@univr.it}
\author[G. Szulda]{Guillaume Szulda}
\address{Department of Mathematics ``Tullio Levi Civita'', University of Padova (Italy), and Laboratoire de Probabilit\'es, Statistique et Mod\'elisation, Universit\'e de Paris Cit\'e (France)}
\email{szulda.guillaume@gmail.com}
\subjclass[2010]{60G51, 60J85, 91G20, 91G30, 91G60}
\thanks{\textit{JEL Classification}: C02, C60, G13, G15.}
\keywords{FX market; multi-currency market; branching process; self-exciting process; time-change; stochastic volatility; deep calibration; affine process.}
\thanks{{\em Acknowledgements: }C.F. is grateful to the Europlace Institute of Finance for financial support to this work. G.S. acknowledges hospitality and financial support from the University of Verona, where part of this work has been conducted. This work is part of the project  ``Term structure dynamics in interest rate and energy markets: modeling and numerics'' (BIRD190200/19) funded by the University of Padova. We are thankful to two anonymous Reviewers for useful comments that helped to improve the paper.}
\date{\today}
\begin{document}

\begin{abstract}
We develop a stochastic volatility framework for modeling multiple currencies based on CBI-time-changed L\'evy processes. The proposed framework captures the typical risk characteristics of FX markets and is coherent with the symmetries of FX rates. Moreover, due to the self-exciting behavior of CBI processes, the volatilities of FX rates exhibit self-exciting dynamics. By relying on the theory of affine processes, we show that our approach is analytically tractable and that the model structure is invariant under a suitable class of risk-neutral measures. A semi-closed pricing formula for currency options is obtained by Fourier methods.
We propose two calibration methods, also by relying on deep-learning techniques, and show that a simple specification of the model can achieve a good fit to market data on a currency triangle.
\end{abstract}

\maketitle

\section{Introduction}\label{sec:introcbitclcurrency} 
	
The Foreign-Exchange (FX) market is one of the largest in the world (see, e.g., \cite{BIS19,Woold19}). From the perspective of quantitative finance, modeling the FX market poses several challenges. First, multi-currency models must respect the symmetric structure of FX rates. To illustrate this aspect, let $S^{d,f}$ represent the value of one unit of a foreign currency $f$ measured in units of the domestic currency $d$. In a multi-currency model, the following symmetric relations must hold:
\begin{itemize}
\item $S^{f,d}=1/S^{d,f}$: the reciprocal of $S^{d,f}$ must coincide with $S^{f,d}$, representing the value of one unit of currency $d$ measured in units of currency $f$. This is referred to as \emph{inversion}.
\item $S^{d,f} = S^{d,e} \times S^{e,f}$, for any other foreign currency $e$. In other words, the FX rate $S^{d,f}$ must be inferred from $S^{d,e}$ and $S^{e,f}$ through multiplication. This is referred to as {\em triangulation}.
\end{itemize}

Besides these symmetric relations, the FX market presents some specific risk characteristics that should be properly reflected in a multi-currency model. First, FX markets are affected by stochastic volatility and jump risk, similarly to the case of equity markets. This has led to the application to FX markets of well-known models initially conceived for stock returns, such as the Heston model (see also Section \ref{sec:literature} below).
Second, the dependence between FX rates is typically stochastic and, in particular, shows evidence of unpredictable changes over time, thus generating correlation risk.
Third, the skew of the FX volatility smile exhibits a stochastic behavior. This fact has been documented in \cite{CW07} by analyzing the time series of risk-reversals\footnote{We recall that a risk-reversal, in the context of FX options, measures the difference in implied volatility between an OTM call option and a put option with the same characteristics and a symmetric delta.}, showing that their values vary significantly over time and exhibit repeated sign changes.
Finally, FX rates are affected by volatility clustering effects, similarly to many other asset classes (see, e.g., \cite[Chapter 7]{ContTankov}). This phenomenon can be observed in Figure \ref{fig:time_series}, which displays the time series of the USD-JPY exchange rate over the period 01/01/2012 - 31/12/2015.
	
	\begin{figure}[t]
		\centering
		\includegraphics[width=0.7\textwidth]{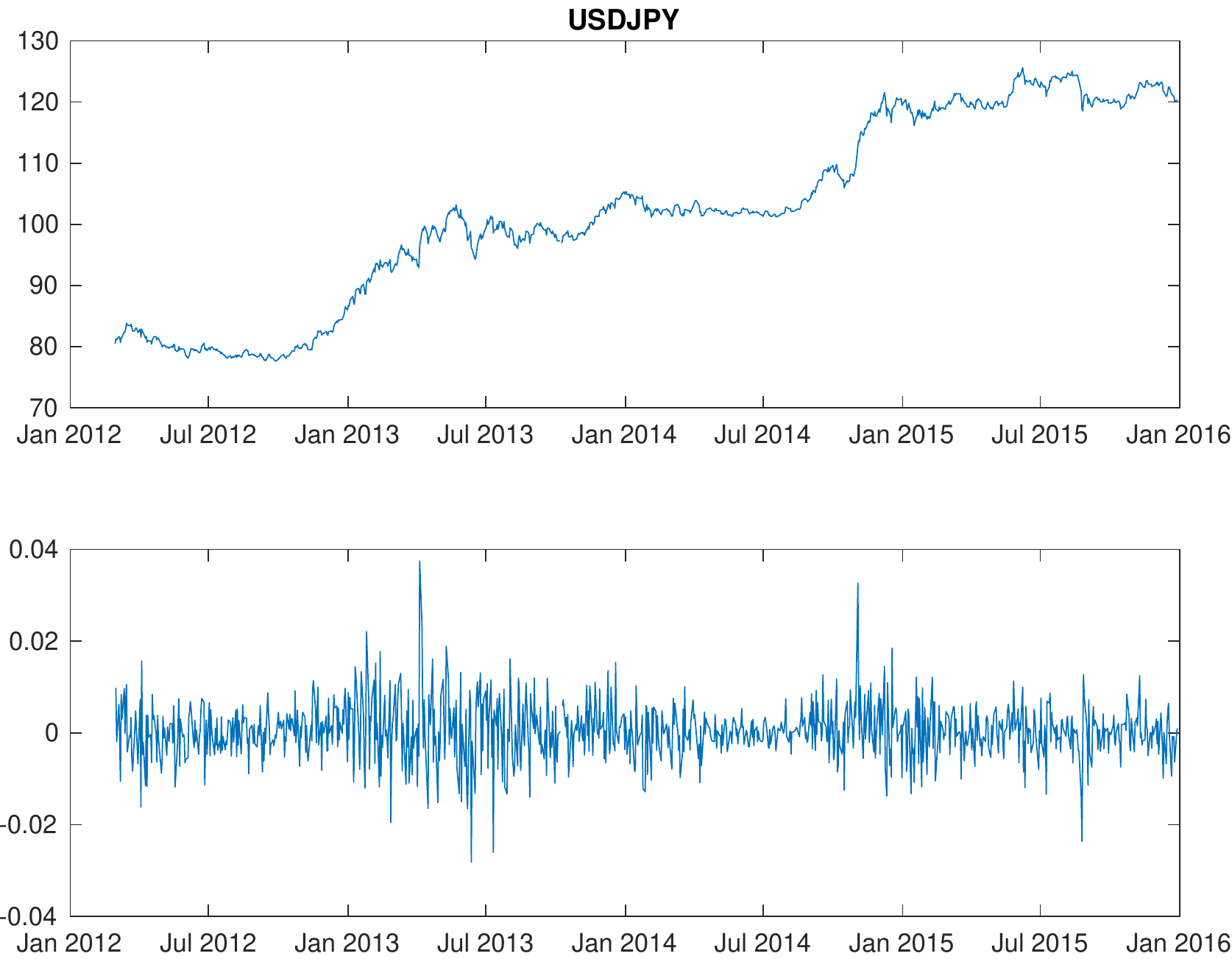}
		\caption{Time series of the USD-JPY exchange rate (upper panel) and its return (lower panel), period 01/01/2012 - 31/12/2015 (source: Bloomberg).}
		\label{fig:time_series}
	\end{figure}

In this paper, we develop a modeling framework for multiple currencies that is consistent with the symmetric structure of FX rates and captures all risk characteristics described above, including stochastic dependence among FX rates as well as between FX rates and their volatilities.
We consider models driven by {\em CBI-time-changed L\'evy processes} (CBITCL processes), a broad and flexible class of processes that allows for self-exciting jump dynamics, with stochastic volatility and mean-reversion (we refer to \cite{FGS22,phdthesis_szulda} for a thorough analysis of CBITCL processes). The proposed approach is fully analytically tractable, due to the fact that CBITCL processes are affine processes and, therefore, their characteristic function can be explicitly characterized. Moreover, we will show that CBITCL processes are {\em coherent} in the sense of \cite{Gno17}, meaning that if an FX rate is modeled by a CBITCL process, then its reciprocal also belongs to the same model class.

We construct our modeling framework by adopting an {\em artificial currency approach} (see also Section \ref{sec:literature} below), which consists in modeling each FX rate as the ratio of two primitive processes, with one primitive process associated to each currency. FX rates then satisfy the inversion and triangulation symmetries by construction and the model formulation reduces to modeling all primitive processes by means of a common family of CBITCL processes.
By relying on a Girsanov-type result for CBITCL processes, we characterize a class of risk-neutral measures that leave invariant the structure of the model. In particular, this allows preserving the CBITCL property under equivalent changes of probability, which enables us to derive an efficient pricing formula for currency options by means of Fourier techniques.

We analyze the empirical performance of a two-dimensional specification of our framework, driven by tempered $\alpha$-stable CBI processes (as recently introduced in \cite{FGS21}) and CGMY processes (see \cite{CGMY02}). We perform a calibration of the model with respect to an FX triangle consisting of three major currency pairs (USD-JPY, EUR-JPY, EUR-USD). We propose two calibration methods: a {\em standard} calibration algorithm and a {\em deep} calibration algorithm, inspired by the deep learning techniques recently developed in \cite{HMT21} and applied here for the first time in a multi-currency setting.
We assess the importance of jumps by showing that our CBITCL model achieves a superior calibration performance with respect to an analogous continuous-path model.
	
\subsection{Related literature} \label{sec:literature}

As mentioned above, our framework is based on an {\em artificial currency approach}, which goes back to the works of \cite{FH97} and \cite{Doust07} ({\em intrinsic currency approach}, in the terminology of \cite{Doust07}). Relying on this approach, several stochastic volatility models for multiple currencies have been developed, see e.g. \cite{Doust12,DeCGG13,GG14,BGP15,GGP21} in a Brownian setting. 
In the latter works, the symmetries of FX rates are respected and several sources of risk can be adequately represented, with the important exceptions of jump risk, volatility clustering and self-excitation effects, which will play an important role in our framework.

Most of the works mentioned in the previous paragraph can be regarded as multi-currency extensions of the Heston model (to this effect, see also \cite{JKWW11}). This is motivated by the fact that the Heston model is known to be stable under inversion (see, e.g., \cite{DBR08}). The property that a certain model class is invariant under inversion has been termed {\em coherence} in \cite{Gno17}, where it has been shown that general affine stochastic volatility models are coherent.  As will be shown below, our modeling framework is coherent in this sense. The coherence property has been recently studied by \cite{GBP20} (under the name of {\em consistency}) in the class of jump-diffusion models.

Beyond the Brownian setting, some important contributions to the modeling of FX rates include the work of \cite{EK06} based on time-inhomogeneous L\'evy processes and the work of \cite{CW07} based on time-changed L\'evy processes with CIR-type activity rates.
More recently, \cite{BDR17} have developed a multi-currency  framework driven by a multi-dimensional L\'evy process with dependent components. In this work, only standard L\'evy processes are considered and, therefore, stochastic volatility is not explicitly modeled. \cite{BDR17} suggest however the possible use of time change methods. This line of research is pursued in \cite{BM18} and further expanded in our work. In particular, \cite{BM18} propose a model based on time-changed L\'evy processes where the activity rate can have jumps and self-excitation. The presence of common jumps between the L\'evy process and the activity rate induces a non-trivial dependence between the FX rate and its volatility. The model of \cite{BM18} is however limited to a single FX rate, in which case the FX symmetries do not play any role. In contrast, we propose a coherent multi-currency framework that satisfies the FX symmetries and exhibits rich and flexible stochastic dynamics for all FX rates and their volatilities, while preserving an analytical tractability comparable to the model of \cite{BM18}.
	
\subsection{Outline of the paper} 
	
The paper is structured as follows. In Section \ref{sec:cbitcl} we recall some basic results on CBITCL processes. In Section \ref{sec:generalmodelcbitclcurrency} we develop the multi-currency modeling framework, while in Section \ref{sec:numericalcbitclcurrency} we  present two calibration methods and analyze the empirical fit to market data on an FX triangle. Finally, Section \ref{sec:conclusioncbitclcurrency} concludes the paper.

\section{CBI-time-changed L\'evy processes}
\label{sec:cbitcl}

In this section, we present some fundamental results on {\em CBI-time-changed L\'evy (CBITCL) processes}, referring to \cite{FGS22} for a complete theoretical analysis of this class of processes, detailed proofs and additional results (see also \cite[Chapter 4]{phdthesis_szulda}). We work on a filtered stochastic basis $(\Omega,\cF,\FF,\QQ)$, where $\FF$ is a filtration satisfying the usual conditions. 

\subsection{Definition and characterization of CBITCL processes} 

Let us start by recalling the definition of a {\em Continuous-state Branching processes with Immigration} (see \cite[Section 5]{Li20}).
\begin{itemize}
\item Let the function $\Psi:\R_-\rightarrow\R$ be given by
\be\label{eq:immigrationmechanism}
\Psi(x) := \beta\,x + \int_0^{+\infty} {(e^{\,x\,z} - 1)\,\nu(\ud z)}, \qquad \forall x \in\R_-,
\ee
where $\beta \geq0$ and $\nu$ is a L\'evy measure on $(0,+\infty)$ such that $\int_0^1 {z\,\nu(\ud z)}<+\infty$;	
\item Let the function $\Phi:\R_-\rightarrow\R$ be given by 
\be\label{eq:branchingmechanism}
\Phi(x) := -\,b\,x +  \frac{\sigma^2}{2}\,x^2 + \int_0^{+\infty} {(e^{\,x\,z} - 1 - x\,z)\,\pi(\ud z)}, \qquad \forall x \in\R_-,
\ee
where $b \in \R$, $\sigma \in\R$ and $\pi$ is a L\'evy measure on $(0,+\infty)$ such that $\int_1^{+\infty} {z\,\pi(\ud z)}<+\infty$.	
\end{itemize}

\begin{defn}\label{def:defcbi}
A Markov process $X=(X_t)_{t \geq 0}$ with initial value $X_0$ and state space $\R_+$ is said to be a \emph{Continuous-state Branching process with Immigration} (CBI) with immigration mechanism $\Psi$ and branching mechanism $\Phi$ if its Laplace transform is given by
\[
\EE[e^{uX_T}] = \exp\left(\int_0^T {\Psi\bigl(\cV(s,u)\bigr)\,\ud s} + \cV(T,u)X_0\right),
\]
for all $u \in\R_-$ and $T \in\R_+$, where the function $\cV(\cdot,u):\R_+\rightarrow\R_-$ is the unique solution to 
\[
\frac{\partial\cV}{\partial t}(t,u) = \Phi\bigl(\cV(t,u)\bigr), \qquad \cV(0,u) = u.
\]
\end{defn}

Definition \ref{def:defcbi} corresponds to a conservative stochastically continuous CBI process in the sense of \cite{KW71}. Note that CBI processes are non-negative, strongly Markov (Feller) and with c\`adl\`ag trajectories. As a consequence, the path integral $Y:=\int_0^{\cdot}X_s\,\ud s$ of a CBI process $X$ is always well defined as a non-decreasing process. It can therefore be used as a finite continuous time-change. This motivates the following definition.

\begin{defn}\label{def:cbitcl}
A process $(X,Z)=((X_t, Z_t))_{t \geq 0}$ is said to be a \emph{CBI-time-changed L\'evy process} (CBITCL process) if
\begin{enumerate}
\item[(i)] $X$ is a CBI process, and
\item[(ii)] $Z=L_Y$, where $L=(L_t)_{t\geq0}$ is a L\'evy process independent of $X$ and $Y=(Y_t)_{t\geq0}$ denotes the process defined by $Y_t:=\int_0^tX_s\,\ud s$, for all $t\in\R_+$.
\end{enumerate}
\end{defn}

The L\'evy exponent $\Xi$ of the L\'evy process $L$ admits the L\'evy-Khintchine representation
\[
\Xi(u) := b_Z\,u + \frac{\sigma_Z^2}{2}\,u^2 + \int_{\R\setminus\{0\}} {\bigl(e^{zu} - 1 - zu\ind_{\{|z|<1\}}\bigr)\,\gamma_Z(\ud z)}, 
\qquad \forall u \in \im\R,
\]
where $(b_Z, \sigma_Z, \gamma_Z)$ is the L\'evy triplet of $L$, with $b_Z \in \R$, $\sigma_Z \in\R$ and $\gamma_Z$ a L\'evy measure on $\R$. 
In the following, we shall write $\uCBITCL(X_0, \Psi, \Phi, \Xi)$ to denote that a process $(X,Z)$ is a CBI-time-changed L\'evy process in the sense of Definition \ref{def:cbitcl}, where $\Psi$ and $\Phi$ denote respectively the immigration and branching mechanisms of the CBI process $X$ and $\Xi$ the L\'evy exponent of $L$.

CBI processes admit a characterization in terms of a Lamperti representation (see  \cite{EPU13} and also \cite[Chapter 2]{phdthesis_szulda}). However, it turns out that there exists an equivalent representation that is better suited to our purposes, in terms of solutions to certain stochastic integral equations of the Dawson-Li type (see \cite{DL06}). To this effect, let us introduce the following objects:
\begin{itemize}
\item two Brownian motions $B^1=(B_t^1)_{t \geq 0}$ and $B^2=(B_t^2)_{t \geq 0}$;
\item a Poisson random measure $N_0(\ud t, \ud x)$ on $(0,+\infty)^2$ with compensator $\ud t\,\nu(\ud x)$ and compensated measure $\tildeN_0(\ud t, \ud x) := N_0(\ud t, \ud x) - \ud t\,\nu(\ud x)$;
\item a Poisson random measure $N_1(\ud t, \ud u, \ud x)$ on $(0,+\infty)^3$ with compensator $\ud t\,\ud u\,\pi(\ud x)$ and compensated measure $\tildeN_1(\ud t, \ud u, \ud x) := N_1(\ud t, \ud u, \ud x) - \ud t\,\ud u\,\pi(\ud x)$;
\item a Poisson random measure $N_2(\ud t, \ud u, \ud x)$ on $(0,+\infty)^2\times\R$ with compensator $\ud t\,\ud u\,\gamma_Z(\ud x)$ and compensated measure $\tildeN_2(\ud t, \ud u, \ud x) := N_2(\ud t, \ud u, \ud x) - \ud t\,\ud u\,\gamma_Z(\ud x)$.	
\end{itemize}
We furthermore assume that $B^1$, $B^2$, $N_0$, $N_1$ and $N_2$ are mutually independent. 
For any $X_0\in\R_+$, let us consider the following stochastic integral equations: 
\begin{align}
X_t &= X_0 + \int_0^t {\bigl( \beta - b\,X_s \bigr)\,\ud s} + \sigma\int_0^t {\sqrt{X_s}\,\ud B_s^1} \notag \\
&\quad+ \int_0^t \int_0^{+\infty} {x\,N_0(\ud s, \ud x)} + \int_0^t \int_0^{X_{s-}}\! \int_0^{+\infty} {x\,\tildeN_1(\ud s, \ud u, \ud x)},
\label{eq:cbitclsde1}\\
Z_t &= b_Z\int_0^t {X_s\,\ud s} + \sigma_Z\int_0^t {\sqrt{X_s}\,\ud B_s^2} + \int_0^t \int_0^{X_{s-}}\!\int_{|x| \geq 1} {x\,N_2(\ud s, \ud u, \ud x)}\notag\\
&\quad+ \int_0^t \int_0^{X_{s-}}\!\int_{|x| < 1} {x\,\tildeN_2(\ud s, \ud u, \ud x)}.
\label{eq:cbitclsde2}
\end{align}

The connection between Definition \ref{def:cbitcl} and the stochastic integral equations \eqref{eq:cbitclsde1}-\eqref{eq:cbitclsde2} is given in the following proposition (see \cite[Theorem 2.3]{FGS22}). 

\begin{prop}\label{lem:weakequivalencecbitcl}
A process $(X,Z)$ with initial value $(X_0, 0)$ is a $\uCBITCL(X_0, \Psi, \Phi, \Xi)$ process if and only if it is a weak solution to the stochastic integral equations \eqref{eq:cbitclsde1}-\eqref{eq:cbitclsde2}.
\end{prop}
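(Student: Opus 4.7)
The plan is to prove the two implications of the equivalence separately, in both cases relying on the Dawson--Li stochastic differential representation of CBI processes (see \cite{DL06}) and on a time-change argument for the $Z$ component.

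For the implication $\uCBITCL \Rightarrow$ weak solution of \eqref{eq:cbitclsde1}--\eqref{eq:cbitclsde2}, I would first invoke the Dawson--Li theorem, which asserts that the CBI process $X$ is a weak solution to \eqref{eq:cbitclsde1} on some stochastic basis supporting a Brownian motion $B^1$ and Poisson random measures $N_0, N_1$ with the correct intensities. Then I would apply the L\'evy--It\^o decomposition to the independent L\'evy process $L$, compose each of its terms with the continuous non-decreasing time-change $Y_t=\int_0^t X_s\,\ud s$, and rewrite the outcome as an It\^o-type stochastic differential with respect to new driving objects. The drift part yields $b_Z\int_0^{\cdot}X_s\,\ud s$ directly; for the continuous martingale part I would use the Dambis--Dubins--Schwarz theorem to rewrite $\sigma_Z W_{Y_\cdot}$ as $\sigma_Z\int_0^{\cdot}\sqrt{X_s}\,\ud B_s^2$, possibly extending the stochastic basis with an auxiliary independent Brownian motion to handle the set $\{X=0\}$ where the time-change degenerates; for the jump part, the time-changed Poisson random measure has stochastic intensity $X_{t-}\,\ud t\,\gamma_Z(\ud x)$, and by the standard Grigelionis / Ikeda--Watanabe thinning construction it can be represented as an integral against a Poisson random measure $N_2$ on $(0,+\infty)^2\times\R$ with deterministic intensity $\ud t\,\ud u\,\gamma_Z(\ud x)$, restricted to the strip $u \in (0, X_{s-}]$. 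This is precisely the structure of \eqref{eq:cbitclsde2}.

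For the converse implication, I would again use Dawson--Li to identify the $X$ component of any weak solution of \eqref{eq:cbitclsde1}--\eqref{eq:cbitclsde2} as a CBI process with mechanisms $(\Psi,\Phi)$. It then remains to exhibit a L\'evy process $L$ with exponent $\Xi$, independent of $X$, such that $Z=L_Y$. The cleanest route is via characteristic functions: by applying It\^o's formula to the exponential-affine ansatz $\exp\bigl(\phi(T-t)\,X_t+\psi\,Z_t\bigr)$ and matching coefficients so that $\phi$ solves the generalized Riccati equation $\partial_t\phi=\Phi(\phi)+\Xi(\psi)$ with $\phi(0)=0$, I would obtain a closed-form expression for the joint Laplace transform of $(X_T,Z_T)$. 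On the other hand, the CBITCL definition gives, by conditioning on $X$ and using the L\'evy--Khintchine formula for $L$ together with Definition \ref{def:defcbi} for $X$, exactly the same joint Laplace transform. The equality of the two transforms then forces $(X,Z)$ to be CBITCL in law, and an explicit L\'evy process $L$ with the required properties can be produced on a possibly extended basis by the time-inverse construction $L_t:=Z_{T_t}$, where $T_t:=\inf\{s\geq 0:Y_s>t\}$.

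The main technical obstacle is the careful handling of the jump time-change in both directions: one must accommodate the random intensity $X_{t-}$ of the jump measure while preserving an underlying Poisson structure with \emph{deterministic} intensity, which is precisely the role of the auxiliary coordinate $u$ in $N_1$ and $N_2$. A secondary difficulty is the degeneracy of the time-change on $\{X=0\}$, which may require an enlargement of the stochastic basis to consistently define $B^2$ and to ensure that the L\'evy process $L$ constructed in the reverse direction is genuinely independent of $X$.
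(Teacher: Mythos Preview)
The paper does not actually supply a proof of this proposition: it merely states the result and refers the reader to \cite[Theorem 2.3]{FGS22} (a companion paper by the same authors) for the argument. So there is no in-paper proof to compare against.

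Your outline is a reasonable reconstruction of how such a proof can be carried out, and the ingredients you identify---the Dawson--Li representation for the CBI component, the Dambis--Dubins--Schwarz rewriting of the time-changed Brownian part, and the Ikeda--Watanabe/Grigelionis thinning that trades a point process with stochastic intensity $X_{t-}\,\ud t\,\gamma_Z(\ud x)$ for a Poisson random measure on an enlarged state space with deterministic intensity $\ud t\,\ud u\,\gamma_Z(\ud x)$---are exactly the right ones. Two points deserve a bit more care. First, in the forward direction you should explicitly verify that the newly constructed objects $B^2$ and $N_2$ are independent of $B^1,N_0,N_1$ (this ultimately comes from the assumed independence of $L$ and $X$, but the passage through DDS and the thinning representation requires an argument). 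Second, for the converse, the inverse time-change $L_t:=Z_{T_t}$ is only unproblematic on $[0,Y_\infty)$ and when $Y$ is strictly increasing; rather than constructing $L$ pathwise, it is cleaner to argue at the level of laws: compute the conditional characteristic function $\EE[e^{\im u Z_T}\mid \sigma(X_s:s\ge 0)]$ directly from \eqref{eq:cbitclsde2} (the driving noises $B^2,N_2$ are independent of $X$) and observe that it equals $\exp(\Xi(\im u)\,Y_T)$, which is precisely the conditional law of $L_{Y_T}$ for an independent L\'evy process with exponent $\Xi$. Combined with the Markov property of $(X,Z)$, this identifies the law of $(X,Z)$ with that of a $\uCBITCL(X_0,\Psi,\Phi,\Xi)$ process without needing to invert the time-change.
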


On a given stochastic basis $(\Omega,\cF,\FF,\QQ)$, there exists a unique strong solution to \eqref{eq:cbitclsde1}-\eqref{eq:cbitclsde2}. Indeed, there is a unique strong solution $X=(X_t)_{t \geq 0}$ to  \eqref{eq:cbitclsde1}, which corresponds to the \emph{Dawson-Li representation} of a CBI process  (see \cite[Theorems 5.1 and 5.2]{DL06}). 
In turn, since the right-hand side of \eqref{eq:cbitclsde2} does depend only on the process $X=(X_t)_{t \geq 0}$ and not on $Z=(Z_t)_{t\geq0}$, this obviously implies the existence of a unique strong solution $Z=(Z_t)_{t\geq0}$ to \eqref{eq:cbitclsde2} as well.
In the following, if a CBITCL process $(X,Z)$ is directly defined as the unique strong solution to \eqref{eq:cbitclsde1}-\eqref{eq:cbitclsde2}, we will say that $(X,Z)$ is defined through its  \emph{extended Dawson-Li representation} (see \cite[Section 2.1]{FGS22}). 

\begin{rem}
The system of stochastic integral equations \eqref{eq:cbitclsde1}-\eqref{eq:cbitclsde2} makes evident the self-exciting behavior of a CBITCL process. More specifically, we can observe the following:
\begin{itemize}
\item For the CBI process $X$, the domain of integration of the integral with respect to $\widetilde{N}_1$ depends on the value of the process itself. This generates a self-exciting effect since, when a jump occurs, the jump intensity of $X$ increases. In turn, this increases the likelihood of subsequent jumps of $X$, thereby generating jump clustering phenomena.
\item The volatility components of $Z$ depend on the value of the process $X$. Therefore, large values of $X$ increase the volatility of the process $Z$, thereby increasing the likelihood of volatility clusters in the dynamics of $Z$ as well as joint clusters between $X$ and $Z$.
\end{itemize}
\end{rem}

\subsection{Affine property and changes of probability}

The next proposition shows that CBITCL processes are affine and provides an explicit characterization of the Laplace-Fourier transform. 

\begin{prop}\label{prop:cbitclaffine}  
Let $(X,Z)$ be a $\uCBITCL(X_0, \Psi, \Phi, \Xi)$ process and consider the joint process $(X,Y,Z)$, where $Y:= \int_0^{\cdot}{X_s\,\ud s}$. Then, the process $(X,Y,Z)$ is an affine process on the state space $\R_+^2\times\R$ with conditional Laplace-Fourier transform given by 
\[
\EE\bigl[e^{\,u_1\,X_T + u_2\,Y_T + u_3\,Z_T}\,\bigr|\,\cF_t\bigr] = \exp\Bigl(\cU(T-t,u_1, u_2, u_3) + \cV(T-t, u_1, u_2, u_3)\,X_t + u_2\,Y_t + u_3\,Z_t\Bigr), 
\]
for all $(u_1, u_2, u_3) \in \C_-^2\times\im\R$ and $0 \leq t \leq T < +\infty$, where the functions $\cU(\cdot, u_1, u_2, u_3):\R_+\rightarrow\C$ and $\cV(\cdot, u_1, u_2, u_3):\R_+\rightarrow\C_-$ are solutions to
\begin{align}
\cU(t, u_1, u_2, u_3) &= \int_0^{t} {\Psi\bigl(\cV(s, u_1, u_2, u_3)\bigr)\,\ud s},\label{eq:Riccati1}\\
\frac{\partial \cV}{\partial t}(t, u_1, u_2, u_3) &= \Phi\bigl(\cV(t, u_1, u_2, u_3)\bigr) + u_2 + \Xi(u_3), \qquad \cV(0, u_1, u_2, u_3) = u_1,
\label{eq:Riccati2}
\end{align}	
where $\Psi:\C_-\rightarrow\C$ and $\Phi:\C_-\rightarrow\C$ denote  the analytic extensions to $\C_-$ of the corresponding functions defined in \eqref{eq:immigrationmechanism} and \eqref{eq:branchingmechanism}, respectively.
\end{prop}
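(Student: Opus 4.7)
The plan is to exploit the independence between $L$ and $X$ to reduce the joint Laplace-Fourier transform of $(X,Y,Z)$ to the Laplace transform of $(X,Y)$, which is a classical object for CBI processes, and then to solve the associated Riccati equation in $\C_-$.

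First, I would condition on the $\sigma$-algebra $\cF_t \vee \sigma(X_s : t \le s \le T)$. Since $L$ is independent of $X$, the increment $Z_T - Z_t = L_{Y_T} - L_{Y_t}$ is, conditionally on this $\sigma$-algebra, distributed as $L_{Y_T - Y_t}$ with $Y_T - Y_t$ deterministic in this conditioning; the L\'evy-Khintchine formula for $L$ then gives
\[
\EE\bigl[e^{u_3(Z_T - Z_t)}\,\bigr|\,\cF_t \vee \sigma(X_s : t \le s \le T)\bigr] = e^{(Y_T - Y_t)\,\Xi(u_3)}, \qquad u_3 \in \im\R.
\]
Taking successive conditional expectations, this reduces the computation to
\[
\EE\bigl[e^{u_1 X_T + u_2 Y_T + u_3 Z_T}\,\bigr|\,\cF_t\bigr] = e^{u_2 Y_t + u_3 Z_t}\,\EE\bigl[e^{u_1 X_T + (u_2 + \Xi(u_3))(Y_T - Y_t)}\,\bigr|\,\cF_t\bigr],
\]
so that only the joint Laplace transform of $(X_T, Y_T-Y_t)$, with a complex parameter $\tilde u_2 := u_2 + \Xi(u_3) \in \C_-$ conjugate to $Y$, needs to be identified.

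Next, I would invoke the affine structure of the CBI process $X$: by the Markov property and an argument analogous to Definition \ref{def:defcbi} (but enriched with the $\tilde u_2 Y$ term), the conditional Laplace transform has an exponential-affine form with exponent $\cU(T-t,u_1,\tilde u_2) + \cV(T-t,u_1,\tilde u_2) X_t$, where $\cV$ solves the generalized Riccati ODE $\partial_t \cV = \Phi(\cV) + \tilde u_2$, $\cV(0) = u_1$, and $\cU$ is given by the integral of $\Psi(\cV)$. Substituting $\tilde u_2 = u_2 + \Xi(u_3)$ produces exactly \eqref{eq:Riccati1}--\eqref{eq:Riccati2} and the claimed formula. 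The standard way to establish this extended CBI transform is to apply It\^o's formula to $e^{\cU(T-t,\cdot) + \cV(T-t,\cdot) X_t + \tilde u_2 Y_t}$ using the Dawson-Li representation \eqref{eq:cbitclsde1}, show that the drift vanishes thanks to the Riccati system, and deduce that the process is a martingale; evaluating at $t=T$ and $t=0$ yields the formula.

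The main obstacle is not the algebra of the Riccati system but the extension of the argument to complex arguments $u_1 \in \C_-$ and $\tilde u_2 \in \C_-$: one must check that $\Phi$ and $\Psi$ admit analytic extensions to $\C_-$ (this follows from the integrability conditions on $\nu$ and $\pi$ together with dominated convergence), that the complex Riccati ODE \eqref{eq:Riccati2} admits a unique solution $\cV(\cdot,u_1,u_2,u_3):\R_+\to\C_-$ which stays in $\C_-$ (this uses the fact that $\Re\Phi \le 0$ on $\C_-$ and standard invariance arguments for affine processes), and that the candidate exponential-affine expression is genuinely a martingale rather than a local martingale in the complex setting (by a uniform integrability argument exploiting that real parts of the exponents are non-positive). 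Once these analytic points are in place, identification of $(X,Y,Z)$ as an affine process on $\R_+^2 \times \R$ is immediate from the exponential-affine form of its characteristic function.
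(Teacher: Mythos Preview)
Your proposal is correct and follows a genuinely different route from the paper. The paper's proof is essentially a two-line citation: it invokes \cite[Corollary~2.10]{DFS03} to identify $X$ as an affine process and then applies the general machinery of \cite[Theorems~4.10 and~4.16]{phdthesis_kr} on time-changed L\'evy processes and affine stochastic volatility models to obtain the affine structure of $(X,Y,Z)$ and the form of the Riccati system. In contrast, you give a direct, self-contained argument: conditioning on the path of $X$ to absorb the time-changed L\'evy component into an effective parameter $\tilde u_2 = u_2 + \Xi(u_3)$, and then verifying the affine transform for $(X,Y)$ via It\^o's formula applied to the Dawson--Li representation~\eqref{eq:cbitclsde1}.

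What your approach buys is transparency: one sees explicitly how the term $\Xi(u_3)$ enters the Riccati ODE for $\cV$, and the argument does not rely on external structural theorems. The paper's approach buys brevity and places the result in the broader context of affine stochastic volatility models (cf.\ Remark~\ref{rem:coherence}). One point to be careful about in your conditioning step is the interaction with the ambient filtration $\FF$: the identity $\EE[e^{u_3(Z_T-Z_t)}\mid \cF_t\vee\sigma(X_s:t\le s\le T)] = e^{(Y_T-Y_t)\Xi(u_3)}$ requires that the increment $L_{Y_T}-L_{Y_t}$ be independent of $\cF_t$ given the future path of $X$, which is immediate for the natural filtration of $(X,Z)$ (using that $L$ has independent increments and is independent of $X$) but should be checked if $\FF$ is strictly larger. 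The analytic extension issues you flag (well-posedness of the complex Riccati ODE in $\C_-$ and the passage from local martingale to martingale) are exactly the right points to address, and your sketch of how to handle them is sound.
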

\begin{proof}
By \cite[Corollary 2.10]{DFS03}, the process $X$ is an affine process. The affine property of $(X,Y,Z)$ and the characterization of its conditional Laplace-Fourier transform then follow by an application of \cite[Theorems 4.10 and 4.16]{phdthesis_kr} (see \cite[Section 2.2]{FGS22} for additional details).
\end{proof}

\begin{rem}\label{rem:coherence}
In view of Proposition \ref{prop:cbitclaffine}, CBITCL processes can be viewed as {\em affine stochastic volatility models} in the sense of \cite[Chapter 5]{phdthesis_kr}. In the context of FX modeling, such models have been proved to be {\em coherent} by \cite{Gno17}, as mentioned in the Introduction. This fact will play an important role in the construction of our multi-currency framework in Section \ref{sec:generalmodelcbitclcurrency}.
We refer to \cite[Section 2.2]{FGS22} for a detailed analysis of the relation between CBITCL processes and affine stochastic volatility models.
\end{rem}

We close this section by describing a class of equivalent changes of probability of Esscher type that leave invariant the CBITCL structure. Let us first define the convex set $\cD_X$ as follows:
\be\label{eq:domaincbi}
\cD_X := \biggl\{ x \in \R : \int_1^{+\infty} {e^{xz}(\nu+\pi)(\ud z)} < +\infty \biggr\}.
\ee
The set $\cD_X$ is the effective domain of the functions $\Psi$ and $\Phi$, which can be extended as finite-valued convex functions on $\cD_X$. Note that $\cD_X$ also represents the extended domain of the Laplace transform of the CBI process $X$ (see \cite[Theorem 2.6]{FGS21}). Let us also introduce the convex set
\be\label{eq:domaincbitcl}
\cD_Z := \biggl\{ x \in \R : \int_{|z| \geq 1} {e^{xz}\gamma_Z(\ud z)} < +\infty \biggr\},
\ee
which represents the effective domain of the L\'evy exponent $\Xi$ when restricted to real arguments. Let us fix $\zeta \in \R$ and $\lambda \in \R$ and consider the process $\cW=(\cW_t)_{t \geq 0}$ defined by
\be\label{eq:processW}
\cW_t := \zeta\,(X_t-X_0) + \lambda\,Z_t, \qquad \text{ for all } t\in\R_+.
\ee 
By \cite[Proposition II.8.26]{JS03}, it can be checked that $\cW$ is an exponentially special semimartingale if and only if $\zeta \in \cD_X$ and $\lambda \in \cD_Z$. In this case, $\cW$ admits a unique exponential compensator, i.e., a predictable process of finite variation, denoted by $\cK=(\cK_t)_{t \geq 0}$, such that $\exp(\cW - \cK)$ is a local martingale (see \cite{KS02cumulant}). 
The following lemma provides the explicit expression of $\cK$.

\begin{lem}\label{lem:exponentialcompensator}
Let $(X,Z)$ be a $\uCBITCL(X_0, \Psi, \Phi, \Xi)$ process. Consider the process $\cW$ defined in \eqref{eq:processW}, with $\zeta \in \cD_X$ and $\lambda \in \cD_Z$. Then, the exponential compensator $\cK$ of $\cW$ is given by 
\be\label{eq:exponentialcompensator}
\cK_t = t\,\Psi(\zeta) + Y_t\,\bigl(\Phi(\zeta) + \Xi(\lambda)\bigr), \qquad \text{for all }\,t\in\R_+.
\ee
\end{lem}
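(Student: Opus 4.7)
The plan is to verify the formula by computing the semimartingale characteristics of $\cW$ directly from the extended Dawson--Li representation \eqref{eq:cbitclsde1}--\eqref{eq:cbitclsde2} and then applying the classical formula for the exponential compensator of an exponentially special semimartingale (see, e.g., \cite{KS02cumulant} or \cite[Section II.2]{JS03}).

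First, I would write $\cW=\zeta(X-X_0)+\lambda Z$ in canonical form with respect to the truncation function $h(w)=w\ind_{\{|w|<1\}}$. By independence of $B^1,B^2,N_0,N_1,N_2$, the continuous local martingale part of $\cW$ is $\zeta\sigma\int_0^{\cdot}\sqrt{X_s}\,\ud B^1_s+\lambda\sigma_Z\int_0^{\cdot}\sqrt{X_s}\,\ud B^2_s$, so that $\langle\cW^c\rangle_t=(\zeta^2\sigma^2+\lambda^2\sigma_Z^2)\,Y_t$, and the jumps of $\cW$ come from three mutually exclusive sources. Hence, the predictable compensator $\nu^{\cW}$ of the jump measure of $\cW$ is the sum of the pushforwards of $\ud s\,\nu(\ud z)$, $X_{s-}\,\ud s\,\pi(\ud z)$, and $X_{s-}\,\ud s\,\gamma_Z(\ud z)$ under the maps $z\mapsto\zeta z$, $z\mapsto\zeta z$, and $z\mapsto\lambda z$, respectively. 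The assumptions $\zeta\in\cD_X$ and $\lambda\in\cD_Z$ guarantee that all the exponential integrals against $\nu^{\cW}$ are finite, so that $\cW$ is indeed exponentially special.

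Next, I would apply the standard formula
\[
\cK_t = B^h_t + \tfrac{1}{2}\langle\cW^c\rangle_t + \int_0^t\!\int \bigl(e^{x}-1-h(x)\bigr)\,\nu^{\cW}(\ud s,\ud x),
\]
where $B^h$ denotes the predictable drift of $\cW$ under truncation $h$. The crucial observation is that the three Poisson integrals in \eqref{eq:cbitclsde1}--\eqref{eq:cbitclsde2} use three different truncation conventions (the $N_0$-integral is uncompensated, the $N_1$-integral is globally compensated, and the $N_2$-integrals are split at $|z|<1$), so one has to re-split each of them according to $|w|<1$ with $w=\zeta z$ or $w=\lambda z$ and collect the resulting drift corrections into $B^h$. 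After this rewriting, the $-h(x)$ piece of the exponential integral exactly absorbs the truncation-dependent corrections, leaving the clean contributions $t\int_0^{+\infty}(e^{\zeta z}-1)\,\nu(\ud z)$ from $N_0$, $Y_t\int_0^{+\infty}(e^{\zeta z}-1-\zeta z)\,\pi(\ud z)$ from $N_1$, and $Y_t\int_{\R\setminus\{0\}}(e^{\lambda z}-1-\lambda z\ind_{\{|z|<1\}})\,\gamma_Z(\ud z)$ from $N_2$. Combining these with the remaining drift $\zeta\beta\,t+(\lambda b_Z-\zeta b)\,Y_t$ and with $\tfrac{1}{2}\langle\cW^c\rangle_t$, and recognising the definitions \eqref{eq:immigrationmechanism} and \eqref{eq:branchingmechanism} of $\Psi$, $\Phi$ together with the L\'evy--Khintchine form of $\Xi$, produces precisely \eqref{eq:exponentialcompensator}.

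The main obstacle is the combinatorial bookkeeping in this second step: one has to carefully track how the three distinct truncations used in \eqref{eq:cbitclsde1}--\eqref{eq:cbitclsde2} recombine with the single truncation required to put $\cW$ in canonical form, and verify that no truncation-dependent residual survives. Once this is handled, the formula falls out mechanically, and the uniqueness of $\cK$ as the exponential compensator follows from its predictable finite-variation character via the general theory of special semimartingales (see \cite{KS02cumulant}).
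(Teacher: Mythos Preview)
Your proposal is correct and follows essentially the same route as the paper: the paper's proof is only a sketch, but it too computes the semimartingale differential characteristics of $(X,Z)$ from the extended Dawson--Li representation \eqref{eq:cbitclsde1}--\eqref{eq:cbitclsde2} and then applies the Kallsen--Shiryaev formula for the exponential compensator, noting quasi-left-continuity. Your write-up simply spells out the truncation bookkeeping that the paper defers to \cite{FGS22}.
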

\begin{proof}
For brevity of presentation, we only give a sketch of the proof, referring to \cite[Lemma 4.1]{FGS22} for full details. Since CBITCL processes are quasi-left-continuous, the exponential compensator $\cK$ can be explicitly computed in terms of the semimartingale differential characteristics of $(X,Z)$ (see \cite{KS02cumulant}). In view of Proposition \ref{lem:weakequivalencecbitcl}, the differential semimartingale characteristics of $(X,Z)$ can be easily obtained from \eqref{eq:cbitclsde1}-\eqref{eq:cbitclsde2}. Representation \eqref{eq:exponentialcompensator} then follows by standard computations.
\end{proof}

Fixing a time horizon $\horizon < +\infty$, we can state the following Girsanov-type result for CBITCL processes, which will play a central role in the modeling framework developed in the next section.
In the following statement, we denote by $\cD^{\circ}_X$ and $\cD_Z^{\circ}$ the interior of sets $\cD_X$ and $\cD_Z$, respectively.

\begin{thm}\label{thm:girsanovcbitcl}
Let $(X,Z)$ be a $\uCBITCL(X_0, \Psi, \Phi, \Xi)$ process. Consider the process $\cW$ defined in \eqref{eq:processW}, with $\zeta \in \cD^{\circ}_X$ and $\lambda \in \cD^{\circ}_Z$, and its exponential compensator $\cK$ given by \eqref{eq:exponentialcompensator}. Then, the  process $(\exp(\cW_t - \cK_t))_{t\in[0,\cT]}$ is a martingale.
Moreover, setting
\be\label{eq:measurechangecbitcl}
\frac{\ud\QQ^{\prime}}{\ud\QQ} := e^{\cW_{\cT} - \cK_{\cT}},
\ee 
defines a probability measure $\QQ^{\prime}\sim\QQ$ under which $(X,Z)$ remains a CBITCL process up to time $\cT$ with parameters $\beta'$, $\nu'$, $b'$, $\sigma'$, $\pi'$, $b'_Z$, $\sigma'_Z$ and $\gamma_Z'$ given in Table \ref{table:newparameterscbitcl}.
\end{thm}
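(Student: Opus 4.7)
The plan has two parts: first establish that $(\exp(\cW_t - \cK_t))_{t \in [0,\cT]}$ is a true martingale, and then identify the law of $(X,Z)$ under $\QQ'$ by computing its conditional Laplace-Fourier transform and matching it against the transform of a CBITCL process with the stated new parameters.

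\textbf{Martingale property.} By Lemma \ref{lem:exponentialcompensator}, the process $\exp(\cW - \cK)$ is a non-negative local martingale, hence a supermartingale, so it suffices to show $\EE[\exp(\cW_T - \cK_T)] = 1$ for every $T \in [0,\cT]$. To this end, I would first extend the affine formula of Proposition \ref{prop:cbitclaffine} from $(u_1,u_2,u_3) \in \C_-^2 \times \im\R$ to real arguments with $u_1 \in \cD^\circ_X$, $u_2 \in \R_-$ and $u_3 \in \cD^\circ_Z$. This extension is standard for regular affine processes: analyticity of $\Psi$, $\Phi$ and $\Xi$ on neighborhoods of the respective effective domains together with the non-explosion of \eqref{eq:Riccati2} on $[0,\cT]$ for such initial data allow an analytic-continuation argument to transfer the formula from the imaginary axis to the real interior. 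Evaluating the extended formula at $(u_1,u_2,u_3) = (\zeta,\,-\Phi(\zeta) - \Xi(\lambda),\,\lambda)$, the Riccati equation reduces to $\partial_t \cV = \Phi(\cV) - \Phi(\zeta)$ with $\cV(0) = \zeta$, which admits the stationary solution $\cV(t) \equiv \zeta$; then \eqref{eq:Riccati1} gives $\cU(T) = T\,\Psi(\zeta)$ and therefore
\[
\EE\bigl[\exp\bigl(\zeta X_T - (\Phi(\zeta) + \Xi(\lambda))\,Y_T + \lambda Z_T\bigr)\bigr] = \exp\bigl(T\,\Psi(\zeta) + \zeta X_0\bigr).
\]
In view of \eqref{eq:exponentialcompensator} and the definition of $\cW$, this is exactly $\EE[\exp(\cW_T - \cK_T)] = 1$.

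\textbf{Law under $\QQ'$.} For $(v_1,v_2,v_3) \in \C_-^2 \times \im\R$ and $T \in [0,\cT]$, Bayes' rule combined with the expression for $\cW_T - \cK_T$ gives
\[
\EE^{\QQ'}\!\bigl[e^{v_1 X_T + v_2 Y_T + v_3 Z_T}\bigr] = e^{-\zeta X_0 - T\,\Psi(\zeta)}\,\EE\!\bigl[e^{(v_1+\zeta) X_T + (v_2 - \Phi(\zeta) - \Xi(\lambda))\,Y_T + (v_3+\lambda) Z_T}\bigr],
\]
and the right-hand expectation is again evaluated through the extended affine formula. Setting $\widetilde{\cV}(t) := \cV(t) - \zeta$ in the resulting Riccati equation, $\widetilde{\cV}$ satisfies
\[
\partial_t \widetilde{\cV} = \Phi'(\widetilde{\cV}) + v_2 + \Xi'(v_3), \qquad \widetilde{\cV}(0) = v_1,
\]
with the shifted mechanisms
\[
\Psi'(\cdot) := \Psi(\cdot + \zeta) - \Psi(\zeta), \quad \Phi'(\cdot) := \Phi(\cdot + \zeta) - \Phi(\zeta), \quad \Xi'(\cdot) := \Xi(\cdot + \lambda) - \Xi(\lambda),
\]
and the corresponding $\cU$-component reads $\widetilde{\cU}(T) + T\,\Psi(\zeta)$. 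After cancellation with the prefactor $e^{-T\Psi(\zeta)}$ and the $\zeta X_0$ term, one obtains precisely the affine transform of a $\uCBITCL(X_0, \Psi', \Phi', \Xi')$ process, and uniqueness of the Laplace-Fourier transform combined with Proposition \ref{prop:cbitclaffine} identifies the law of $(X,Z)$ under $\QQ'$. Expanding $\Psi'$, $\Phi'$, $\Xi'$ by means of \eqref{eq:immigrationmechanism}, \eqref{eq:branchingmechanism} and the L\'evy-Khintchine representation of $\Xi$ and comparing with the canonical form yields the Esscher-type tilts $\nu'(\ud z) = e^{\zeta z}\nu(\ud z)$, $\pi'(\ud z) = e^{\zeta z}\pi(\ud z)$, $\gamma_Z'(\ud z) = e^{\lambda z}\gamma_Z(\ud z)$, unchanged $\beta' = \beta$, $\sigma' = \sigma$, $\sigma'_Z = \sigma_Z$, and shifted drifts $b' = b - \sigma^2\zeta - \int_0^{+\infty} z\,(e^{\zeta z} - 1)\,\pi(\ud z)$ and $b'_Z = b_Z + \sigma_Z^2\lambda + \int_{|z|<1} z\,(e^{\lambda z} - 1)\,\gamma_Z(\ud z)$, in agreement with Table \ref{table:newparameterscbitcl}.

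\textbf{Main obstacle.} The delicate point is the extension of the affine transform formula to real arguments in $\cD^\circ_X \times \cD^\circ_Z$, which requires ruling out explosion of the Riccati ODE \eqref{eq:Riccati2} on $[0,\cT]$ for the initial data used in both stages. This is precisely where the assumption that $\zeta$ and $\lambda$ lie in the \emph{interior} of their respective effective domains becomes essential, since it provides the room needed for the analytic-continuation argument to run on a full neighborhood of the trajectory. Once this extension is in hand, every subsequent step is algebraic: matching coefficients between the shifted mechanisms and the canonical L\'evy-Khintchine form.
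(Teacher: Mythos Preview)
Your argument is correct and takes a genuinely different route from the paper. The paper argues probabilistically: for the martingale property it invokes an external exponential-moment result for affine processes (Keller-Ressel--Mayerhofer), and for the law under $\QQ'$ it applies Girsanov's theorem to the extended Dawson--Li representation \eqref{eq:cbitclsde1}--\eqref{eq:cbitclsde2} to read off the transformed semimartingale characteristics, and then appeals to Proposition~\ref{lem:weakequivalencecbitcl}. You instead work entirely at the level of the affine transform and the Riccati system: the martingale property comes from the elegant observation that the relevant initial data for \eqref{eq:Riccati2} yields the stationary solution $\cV\equiv\zeta$, and the new law is identified by the shift $\widetilde\cV=\cV-\zeta$, which converts the Riccati equation into the one governed by $(\Psi',\Phi',\Xi')$. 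Your approach is more self-contained within the affine framework and makes the Esscher structure of the parameter change transparent; the paper's approach, via the SDE and Girsanov, is perhaps more direct once the Dawson--Li representation is in hand and avoids the analytic-continuation step that you correctly flag as the delicate point. Two small remarks: your computation is written for the unconditional transform at time $0$, so to fully identify the law under $\QQ'$ you should either repeat it conditionally on $\cF_t$ (the density process has the same multiplicative form) or note that the Markov property is preserved because $\ud\QQ'/\ud\QQ|_{\cF_t}$ is a function of $(t,X_t,Y_t,Z_t)$; and the non-explosion of the Riccati in the second step is cleanly justified by first checking that $(\Psi',\Phi',\Xi')$ are admissible mechanisms (here the interior assumption on $\zeta$ guarantees $\int_1^\infty z\,e^{\zeta z}\pi(\ud z)<\infty$), so that $\widetilde\cV$ is governed by a bona fide CBITCL Riccati at arguments in $\C_-^2\times\im\R$.
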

\begin{proof}
By definition of the exponential compensator, the process $\exp(\cW-\cK)$ is a strictly positive local martingale. The true martingale property of the process $\exp(\cW_t-\cK_t)_{t\in[0,\cT]}$ follows similarly as in  \cite[Theorem 3.2]{KRM15}, since $\zeta \in \cD^{\circ}_X$ and $\lambda \in \cD^{\circ}_Z$. The fact that $(X,Z)$ is a CBITCL process up to time $\cT$ with parameters given in Table \ref{table:newparameterscbitcl} under $\QQ^{\prime}$ is a consequence of Girsanov's theorem together with Proposition \ref{lem:weakequivalencecbitcl} (see \cite[Theorem 4.2]{FGS22} for full details).
\end{proof}

\begin{table}[h!] 
	\centering
	\captionsetup{justification=centering}
	\begin{tabular}{|l|}
		\hline
		CBITCL parameters under $\QQ^{\prime}$ \\
		\hline
		\hline
		$\beta^{\prime} := \beta$ \\
		\hline
		$\nu^{\prime}(\ud z) := e^{\zeta z}\nu(\ud z)$ \\
		\hline
		$b^{\prime} := b - \zeta\,\sigma^2 - \int_0^{+\infty} {z(e^{\zeta z}-1)\pi(\ud z)}$ \\ 
		\hline
		$\sigma^{\prime} := \sigma$ \\
		\hline
		$\pi^{\prime}(\ud z) := e^{\zeta z}\pi(\ud z)$ \\
		\hline
		$b_Z^{\prime} := b_Z + \lambda\,\sigma_Z^2 + \int_{|z|<1} {z(e^{\lambda z}-1)\,\gamma_Z(\ud z)}$ \\
		\hline
		$\sigma_Z^{\prime} := \sigma_Z$ \\
		\hline
		$\gamma_Z^{\prime}(\ud z) := e^{\lambda z}\gamma_Z(\ud z)$ \\
		\hline
	\end{tabular}
	\caption{Parameter transformations from $\QQ$ to $\QQ^{\prime}$\\ for the CBITCL process $(X,Z)$.}
	\label{table:newparameterscbitcl}
\end{table}

\section{Modeling of multiple currencies via CBITCL processes}\label{sec:generalmodelcbitclcurrency}

In this section, we present our modeling framework for a multi-currency market. In Section \ref{sec:fxmarket}, we introduce the main quantities to be modeled together with the specific requirements induced by absence of arbitrage and by the FX symmetries discussed in the Introduction. Section \ref{sec:constructionmodelcbitclcurrency} contains the construction of the framework and the description of its most relevant features. In Section \ref{sec:currencypricing}, Fourier techniques are applied to the pricing of currency options. 
We work on a filtered stochastic basis $(\Omega,\cF,\FF,\QQ)$ and consider models defined up to a time horizon $\cT<+\infty$.

\subsection{Definition of the multiple currency market}\label{sec:fxmarket}

The FX market involves different economies, each of them associated to a specific currency. The $i\textsuperscript{th}$ and $j\textsuperscript{th}$ currencies are related by the spot FX rate process $S^{i,j}$, representing the value of one unit of currency $j$ measured in units of currency $i$. Our definition of the multiple currency market will make use of the following ingredients, denoting by $N\in\N$, with $N\geq2$, the number of economies (i.e., currencies) considered:
\begin{enumerate}
\item[(i)] $\bm{D} = \{D^i;i=1,\ldots,N\}$ is an $\R^N_{>0}$-valued  process, with $D^i$ representing the bank account of the $i\textsuperscript{th}$ economy, for $i=1,\ldots,N$;
\item[(ii)] $\bm{S} = \{S^{i,j};i,j=1,\ldots,N\}$ is an $\R^{N \times N}_{>0}$-valued  process representing the spot FX rates between the $N$ different currencies and such that $S^{i,i}\equiv 1$, for all $i=1,\ldots,N$.
\end{enumerate}

\begin{defn} \label{def:fxmarket}
We say that the pair $(\bm{D},\bm{S})$ represents a {\em multiple currency market} if, for every $i=1,\ldots,N$, the following assets are traded in the $i\textsuperscript{th}$ economy:
\begin{itemize}
\item the bank account $D^i$;
\item for every $j=1,\ldots,N$ with $j \neq i$, the bank account of the $j\textsuperscript{th}$ economy denominated in units of the $i\textsuperscript{th}$ currency, namely $S^{i,j}D^j$.
\end{itemize}
\end{defn}

We aim at constructing models for multiple currency markets that respect the FX symmetries mentioned in the Introduction and satisfy absence of arbitrage in the sense of {\em no free lunch with vanishing risk} (NFLVR). Since $(\bm{D},\bm{S})$ are strictly positive processes, \cite[Theorem 1.1]{DS98} implies that, for each $i=1,\ldots,N$, NFLVR holds in the $i\textsuperscript{th}$ economy if and only if there exists a risk-neutral measure $\QQ^i$, i.e., a probability measure $\QQ^i$ equivalent to $\QQ$ such that $S^{i,j}D^j/D^i$ is a local martingale under $\QQ^i$.
We then formulate the following definition, which extends \cite[Definition 1]{EG18} to an FX market consisting of an arbitrary number $N$ of currencies.
	
\begin{defn} \label{def:wellposedness}
The multiple currency market $(\bm{D}, \bm{S})$ is said to be {\em well-posed} if the following hold:
\begin{enumerate}
\item[(i)] {\em no direct arbitrage}: $S^{j,i}=1/S^{i,j}$, for all $i,j=1,\ldots,N$;
\item[(ii)] {\em no triangular arbitrage}: $S^{i,j}=S^{i,k} \times S^{k,j}$, for all $i,k,j=1,\ldots,N$; 
\item[(iii)] there exists a risk-neutral measure $\QQ^i$ for the $i\textsuperscript{th}$ economy, for all $i=1,\ldots,N$.
\end{enumerate}
\end{defn}

Besides the requirement of well-posedness, we are interested in multi-currency models that are {\em coherent} in the sense of \cite{Gno17}. This means that, if the FX rate process $S^{i,j}$ belongs to a certain model class under $\QQ^i$, then also its reciprocal $S^{j,i}$ belongs to the same model class under $\QQ^j$, where $\QQ^i$ and $\QQ^j$ are risk-neutral measures for the $i\textsuperscript{th}$ and $j\textsuperscript{th}$ economy, respectively. Obviously, {\em coherence} is a desirable property from the modeling perspective, since it ensures that the model retains its analytical tractability in all $N$ different economies.

To achieve a well-posed as well as coherent multiple currency market, we will proceed as follows:
\begin{enumerate}
\item 
Adopting the {\em artificial currency approach}, we express each currency with respect to an artificial currency indexed by $0$ and construct the artificial FX rates $S^{0,i}$, for $i=1,\ldots,N$. 
\item
We define the FX rates $S^{i,j}$, for all $i,j=1,\ldots,N$, by taking suitable ratios of the artificial FX rates $S^{0,i}$, $i=1,\ldots,N$. Parts (i)-(ii) of Definition \ref{def:wellposedness} are then satisfied by construction.
\item
By relying on Theorem \ref{thm:girsanovcbitcl}, we construct a  risk-neutral measure $\QQ^i$, for every $i=1,\ldots,N$, under which the driving processes remain CBITCL processes, thus ensuring part (iii) of Definition \ref{def:wellposedness} as well as the stability of the structure of the model (coherence).
\end{enumerate}

\subsection{Construction of the modeling framework}\label{sec:constructionmodelcbitclcurrency}
	
The construction of our modeling framework starts by modeling the $N$ artificial FX rates $S^{0,i}$, for $i=1,\ldots,N$, by means of a common family of CBITCL processes. To this effect, we assume that the stochastic basis $(\Omega,\cF,\FF,\QQ)$ supports $d$ mutually independent CBITCL processes $(X^k,Z^k)$,  $k=1,\ldots,d$, defined through the corresponding extended Dawson-Li representations \eqref{eq:cbitclsde1}-\eqref{eq:cbitclsde2}.\footnote{In the following, we shall use the superscript $k$ to denote all parameters and driving processes appearing in the extended Dawson-Li representation \eqref{eq:cbitclsde1}-\eqref{eq:cbitclsde2} of the CBITCL process $(X^k,Z^k)$ on $(\Omega,\cF,\FF,\QQ)$, for $k=1,\ldots,d$.}
For each $k=1,\ldots,d$, we denote by $\cD_{X^k}$ and $\cD_{Z^k}$ the sets \eqref{eq:domaincbi} and \eqref{eq:domaincbitcl}, respectively, associated to the CBITCL process $(X^k,Z^k)$.

For each $i=1,\ldots,N$, we introduce the following parameters:
\begin{itemize}
\item $r^i\in\R$, representing the risk-free short rate in the $i\textsuperscript{th}$ economy and generating the bank account $D^i_t:=\exp(r^i\,t)$, for all $t\in[0,\cT]$;
\item $\zeta^i=(\zeta^i_1,\ldots,\zeta^i_d)\in\R^d$ such that $\zeta^i_k\in\cD^{\circ}_{X^k}$, for all $k=1,\ldots,d$;
\item $\lambda^i=(\lambda^i_1,\ldots,\lambda^i_d)\in\R^d$ such that $\lambda^i_k\in\cD^{\circ}_{Z^k}$, for all $k=1,\ldots,d$.
\end{itemize}
In addition, for each $i=1,\ldots,N$ and $k=1,\ldots,d$, we denote by $\cK^{i,k}=(\cK^{i,k}_t)_{t\in[0,\cT]}$ the exponential compensator of the process $(\zeta^i_k\,X^k + \lambda^i_k\,Z^k)$, as characterized in Lemma \ref{lem:exponentialcompensator}.

\begin{rem}
We point out that the modeling framework developed in this section can be easily generalized to the case of stochastic interest rates. In particular, by allowing the interest rates $r^i$, for $i=1,\ldots,N$, to be driven by the common family of CBITCL processes $(X^k,Z^k)$, $k=1,\ldots,d$, one can introduce dependence between the interest rates, the FX rates and their volatilities.
\end{rem}

For each $i=1,\ldots,N$, we specify the artificial FX rate $S^{0,i}$ as follows:
\be\label{eq:artificialfxrate}
S_t^{0,i} := S_0^{0,i}\,e^{-\,r^i\,t}\prod_{k=1}^{d} e^{\,\zeta^i_k\,X_t^k + \lambda^i_k\,Z_t^k - \,\cK_t^{i,k}}, \qquad \text{ for all } t \in[0,\cT].
\ee 

The artificial FX rates are modeling quantities that cannot be observed in reality. However, the parameters $\zeta^i_k$ and $\lambda^i_k$ will have a specific role in the dynamics of the actual FX rates. Indeed, $\lambda^i_k$ will measure the relative importance of the risk arising from the $k\textsuperscript{th}$ time-changed L\'evy process $Z^k$, while $\zeta^i_k$ will measure the dependence between the $k\textsuperscript{th}$ CBI process $X^k$ and the $i\textsuperscript{th}$ FX rate, as will become clear from Lemma \ref{lem:sdefxrate} below and the following discussion.

\begin{lem}\label{lem:generalmulticurrencymodel}
For each $i=1,\ldots,N$, the process $S^{0,i}=(S^{0,i}_t)_{t\in[0,\cT]}$ satisfies the following dynamics:
\begin{align}
\frac{\ud S_t^{0,i}}{S_{t-}^{0,i}} &= -\,r^i\,\ud t + \sum_{k=1}^{d} \left( \sqrt{X_t^k}\,\bigl( \zeta^i_k\,\sigma^k\,\ud B_t^{k,1} + \lambda^i_k\,\sigma_Z^k\,\ud B_t^{k,2} \bigr) +  \int_0^{+\infty} {(e^{\zeta^i_k x}-1)\tildeN_0^k(\ud t,\ud x)} \right)\notag\\
&\qquad\qquad + \sum_{k=1}^{d} \int_0^{X_{t-}^k}\!\left( \int_0^{+\infty} {(e^{\zeta^i_k x}-1)\tildeN_1^k(\ud t,\ud u,\ud x)} + \int_{\R} {(e^{\lambda^i_k x}-1)\tildeN_2^k(\ud t,\ud u,\ud x)}\right).\label{eq:sdeartificialfxrate}
\end{align}
Moreover, the process $S^{0,i}D^i=(S^{0,i}_tD^i_t)_{t\in[0,\cT]}$ is a martingale on $(\Omega,\cF,\FF,\QQ)$, for all $i=1,\ldots,N$.
\end{lem}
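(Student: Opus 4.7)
The plan has two stages: first derive the SDE by applying It\^o's formula to each factor of the product \eqref{eq:artificialfxrate}, and then establish the martingale property essentially for free from Theorem \ref{thm:girsanovcbitcl} together with the mutual independence of the driving CBITCL processes.

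\smallskip
\textbf{Stage 1 (SDE).} For each $i,k$, set
\be\notag
M^{i,k}_t := \exp\bigl(\zeta^i_k(X^k_t - X^k_0) + \lambda^i_k Z^k_t - \cK^{i,k}_t\bigr),
\ee
so that $S^{0,i}_t = S^{0,i}_0\,e^{-r^i t}\prod_{k=1}^d e^{\zeta^i_k X^k_0}\,M^{i,k}_t$. I first compute the stochastic differential $\ud M^{i,k}$ by applying It\^o's formula to $\exp(\cdot)$ of the semimartingale $\zeta^i_k X^k + \lambda^i_k Z^k - \cK^{i,k}$, plugging in the extended Dawson--Li decompositions \eqref{eq:cbitclsde1}--\eqref{eq:cbitclsde2} and the explicit expression $\cK^{i,k}_t = t\,\Psi^k(\zeta^i_k) + Y^k_t\bigl(\Phi^k(\zeta^i_k) + \Xi^k(\lambda^i_k)\bigr)$ from Lemma \ref{lem:exponentialcompensator}. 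The key cancellation, to be verified by bookkeeping, is that the finite-variation pieces generated by It\^o's formula (the $\ud s$-drift of $X^k$, the It\^o quadratic-variation corrections $\tfrac12(\zeta^i_k\sigma^k)^2 X^k + \tfrac12(\lambda^i_k\sigma^k_Z)^2 X^k$, and the compensators of all three Poisson measures $N^k_0,N^k_1,N^k_2$) sum, after subtracting $\ud\cK^{i,k}$, to exactly zero, by virtue of the very definitions \eqref{eq:immigrationmechanism}--\eqref{eq:branchingmechanism} of $\Psi^k, \Phi^k$ and the L\'evy--Khintchine representation of $\Xi^k$. What remains is
\be\notag
\frac{\ud M^{i,k}_t}{M^{i,k}_{t-}} = \sqrt{X^k_t}\bigl(\zeta^i_k\sigma^k\,\ud B^{k,1}_t + \lambda^i_k\sigma^k_Z\,\ud B^{k,2}_t\bigr) + \int_0^{+\infty}(e^{\zeta^i_k x}-1)\tildeN^k_0(\ud t,\ud x) + \int_0^{X^k_{t-}}\!\!\Bigl(\cdots\Bigr),
\ee
where $(\cdots)$ contains the compensated jump integrals against $\tildeN^k_1$ and $\tildeN^k_2$ as they appear in \eqref{eq:sdeartificialfxrate}. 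Since the five driving noises indexed by $k$ are mutually independent (and in particular the Poisson measures have no common jumps across $k$), the integration-by-parts formula applied to $\prod_k M^{i,k}$ produces no cross terms and the summands simply add; multiplying by the deterministic factor $e^{-r^i t}$ contributes the $-r^i\,\ud t$ drift and yields \eqref{eq:sdeartificialfxrate}.

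\smallskip
\textbf{Stage 2 (martingale property).} Since $\zeta^i_k \in \cD^\circ_{X^k}$ and $\lambda^i_k \in \cD^\circ_{Z^k}$ by assumption, Theorem \ref{thm:girsanovcbitcl} applied separately for each $k$ tells us that each $M^{i,k}$ is a true martingale on $[0,\cT]$ (this is exactly the true-martingale statement proved there, independently of the change-of-measure conclusion). Now
\be\notag
S^{0,i}_t D^i_t = S^{0,i}_0\,\Bigl(\prod_{k=1}^d e^{\zeta^i_k X^k_0}\Bigr)\,\prod_{k=1}^d M^{i,k}_t,
\ee
and because the processes $(X^k,Z^k)$ are mutually independent across $k$, so are the martingales $M^{i,1},\ldots,M^{i,d}$. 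A finite product of mutually independent true martingales starting at $1$ (each in $L^1$) is itself a martingale, by conditioning one factor at a time and using independence together with the tower property. This gives the second claim.

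\smallskip
\textbf{Expected obstacle.} Neither stage hides a conceptual difficulty; the only delicate point is the drift-cancellation verification in Stage 1, which requires careful tracking of the small-jump truncation in \eqref{eq:cbitclsde2} (note that $\tildeN^k_2$ is compensated with the full measure $\gamma^k_Z$, so the $\ind_{\{|x|<1\}}$ in the L\'evy--Khintchine representation of $\Xi^k$ must be matched correctly when identifying the drift) and confirmation that all stochastic integrals against $\tildeN^k_1$, $\tildeN^k_2$ and the Brownian motions are genuine local-martingale differentials under the integrability guaranteed by $\zeta^i_k \in \cD^\circ_{X^k}$ and $\lambda^i_k \in \cD^\circ_{Z^k}$.
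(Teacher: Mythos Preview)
Your proposal is correct and follows exactly the route the paper takes: It\^o's formula applied to \eqref{eq:artificialfxrate} via the extended Dawson--Li representations \eqref{eq:cbitclsde1}--\eqref{eq:cbitclsde2} for the SDE, and the martingale statement of Theorem \ref{thm:girsanovcbitcl} combined with mutual independence of the $(X^k,Z^k)$ for the martingale property of $S^{0,i}D^i$. Your Stage~1 spells out the drift-cancellation bookkeeping and your Stage~2 makes the product-of-independent-martingales step explicit, but these are precisely the two ingredients the paper's (very brief) proof invokes.
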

\begin{proof}
Using specification \eqref{eq:artificialfxrate} of $S^{0,i}$, equation \eqref{eq:sdeartificialfxrate} follows from an application of It\^o's formula together with the extended Dawson-Li representation \eqref{eq:cbitclsde1}-\eqref{eq:cbitclsde2} of the process $(X^k,Z^k)$, for $k=1,\ldots,d$.
The martingale property of $S^{0,i}D^i$ follows from the independence of the CBITCL processes $(X^k,Z^k)$, for  $k=1,\ldots,d$, together with the martingale property stated in Theorem \ref{thm:girsanovcbitcl}. 
\end{proof}

We define the FX rate process $\bm{S}=\{S^{i,j};i,j=1,\ldots,N\}$ as follows:
\be\label{eq:fxrate}
S_t^{i,j} := \frac{S_t^{0,j}}{S_t^{0,i}}, \qquad \text{ for all } i,j=1,\ldots,N \text{ and } t\in[0,\cT].
\ee
This specification of $\bm{S}$ ensures that the inversion and triangulation symmetries of FX rates (corresponding respectively to parts (i) and (ii) of Definition \ref{def:wellposedness}) are satisfied by construction, thereby completing steps (1) and (2) of the model construction outlined at the end of Section \ref{sec:fxmarket}.

The next corollary describes a class of probability measures that leave invariant the structure of our multi-currency model driven by CBITCL processes. This result plays a crucial role in ensuring absence of arbitrage and coherence of our  framework.

\begin{cor}\label{thm:girsanovcbitclcurrency}
For each $i=1,\ldots,N$, setting
\be\label{eq:measurechangecbitclcurrency}
\frac{\ud\QQ^i}{\ud\QQ} := \frac{S_{\cT}^{0,i}\,D_{\cT}^i}{S_0^{0,i}},
\ee 
defines a probability measure $\QQ^i\sim\QQ$ under which $(X^k,Z^k)$, for $k=1,\ldots,d$, remain mutually independent CBITCL processes (up to time $\cT$) with parameters given in Table \ref{table:newparameterscbitclcurrency}.
Moreover, for each $i=1,\ldots,N$, the probability measure $\QQ^i$ is a risk-neutral measure for the $i\textsuperscript{th}$ economy.
\end{cor}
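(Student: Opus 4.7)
The plan is to exploit the product structure of the Radon--Nikodym density. First I would rewrite \eqref{eq:measurechangecbitclcurrency} using \eqref{eq:artificialfxrate} and $D^i_{\cT}=e^{r^i\cT}$ as
\[
\frac{S^{0,i}_{\cT}D^i_{\cT}}{S^{0,i}_0} = \prod_{k=1}^d M^{i,k}_{\cT},\qquad M^{i,k}_t := \exp\bigl(\zeta^i_k(X^k_t-X^k_0)+\lambda^i_k Z^k_t-\cK^{i,k}_t\bigr).
\]
By hypothesis $\zeta^i_k\in\cD^{\circ}_{X^k}$ and $\lambda^i_k\in\cD^{\circ}_{Z^k}$, so Theorem~\ref{thm:girsanovcbitcl} applied to each $(X^k,Z^k)$ with $(\zeta,\lambda)=(\zeta^i_k,\lambda^i_k)$ identifies $M^{i,k}$ as a strictly positive $\QQ$-martingale on $[0,\cT]$ of unit expectation. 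Mutual independence of the $(X^k,Z^k)$ under $\QQ$ then implies that the full product is itself a strictly positive $\QQ$-martingale with unit expectation, so that \eqref{eq:measurechangecbitclcurrency} defines a probability measure $\QQ^i\sim\QQ$.

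Next I would establish simultaneously the mutual independence and the claimed marginal laws under $\QQ^i$. Writing $\cF^k$ for the $\sigma$-algebra generated by $(X^k,Z^k)$ on $[0,\cT]$, each $M^{i,k}_{\cT}$ is $\cF^k$-measurable, and so for bounded measurable functionals $f_1,\ldots,f_d$ the factorization of the density and the $\QQ$-independence yield
\[
\EE^{\QQ^i}\!\Biggl[\prod_{k=1}^d f_k\bigl((X^k,Z^k)\bigr)\Biggr] = \prod_{k=1}^d\EE^{\QQ}\bigl[M^{i,k}_{\cT}\,f_k\bigl((X^k,Z^k)\bigr)\bigr] = \prod_{k=1}^d\EE^{\QQ^{i,k}}\bigl[f_k\bigl((X^k,Z^k)\bigr)\bigr],
\]
where $\QQ^{i,k}$ denotes the measure with density $M^{i,k}_{\cT}$ relative to $\QQ$. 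By Theorem~\ref{thm:girsanovcbitcl}, $(X^k,Z^k)$ is a CBITCL process under $\QQ^{i,k}$ with parameters obtained by substituting $(\zeta,\lambda)=(\zeta^i_k,\lambda^i_k)$ into Table~\ref{table:newparameterscbitcl}, which agree with the entries of Table~\ref{table:newparameterscbitclcurrency}. The displayed identity therefore proves both mutual independence and the claimed marginal CBITCL laws under $\QQ^i$.

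For the risk-neutral property I would invoke the abstract Bayes rule. By Lemma~\ref{lem:generalmulticurrencymodel} the density process equals $\EE^{\QQ}[\ud\QQ^i/\ud\QQ\,|\,\cF_t]=S^{0,i}_tD^i_t/S^{0,i}_0$, and for every $j=1,\ldots,N$, the triangulation \eqref{eq:fxrate} gives
\[
\frac{S^{i,j}_tD^j_t}{D^i_t}\cdot\frac{S^{0,i}_tD^i_t}{S^{0,i}_0} = \frac{S^{0,j}_tD^j_t}{S^{0,i}_0},
\]
which is a $\QQ$-martingale by Lemma~\ref{lem:generalmulticurrencymodel}. Bayes' rule then yields that $S^{i,j}D^j/D^i$ is a $\QQ^i$-martingale, in particular a $\QQ^i$-local martingale, so $\QQ^i$ is a risk-neutral measure in the sense of Definition~\ref{def:wellposedness}.

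The delicate point is the propagation of mutual independence through the change of measure: this is what motivates passing through the factorized identity of the second paragraph rather than applying Theorem~\ref{thm:girsanovcbitcl} to a single joint density, and it relies crucially on the fact that each factor $M^{i,k}$ is measurable with respect to the driver $(X^k,Z^k)$ alone.
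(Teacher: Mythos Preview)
Your proposal is correct and follows essentially the same route as the paper: factor the density into the product of the $M^{i,k}$, invoke Theorem~\ref{thm:girsanovcbitcl} on each factor together with the $\QQ$-independence of the $(X^k,Z^k)$ to obtain both the martingale property of the density and the CBITCL laws with preserved independence under $\QQ^i$, and then reduce the risk-neutral property to the $\QQ$-martingale property of $S^{0,j}D^j$ from Lemma~\ref{lem:generalmulticurrencymodel}. Your treatment of the independence step via the factorized expectation identity is more explicit than the paper's one-sentence argument, but the underlying idea is the same.
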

\begin{proof}
In view of Lemma \ref{lem:generalmulticurrencymodel}, $\QQ^i$ is well-defined by \eqref{eq:measurechangecbitclcurrency} as a probability measure equivalent to $\QQ$, for each $i=1,\ldots,N$. The fact that $(X^k,Z^k)$, for all $k=1,\ldots,d$, remains a CBITCL process under $\QQ^i$ with parameters given in Table \ref{table:newparameterscbitclcurrency} follows from Theorem \ref{thm:girsanovcbitcl} together with the independence of the processes $(X^k,Z^k)$, for $k=1,\ldots,d$. Moreover, again the independence of the processes $(X^k,Z^k)$, for $k=1,\ldots,d$, under $\QQ$ together with the structure of the probability $\QQ^i$ defined in \eqref{eq:measurechangecbitclcurrency} implies that the mutual independence is preserved under $\QQ^i$.
Finally, for all $i,j=1,\ldots,N$, in view of \eqref{eq:fxrate} and \eqref{eq:measurechangecbitclcurrency}, the process $S^{i,j}D^j/D^i$ is a local martingale under $\QQ^i$ if and only if $S^{0,j}D^j$ is a local martingale under $\QQ$. Since the latter property holds by Lemma \ref{lem:generalmulticurrencymodel}, the proof is complete.
\end{proof}

\begin{table}[ht]
		\centering
		\captionsetup{justification=centering}
		\begin{tabular}{|l|}
			\hline
			CBITCL parameters under $\QQ^i$ \\
			\hline
			\hline
			$\beta^{i,k} := \beta^k$ \\
			\hline
			$\nu^{i,k}(\ud z) := e^{\zeta^i_k z}\nu^k(\ud z)$ \\
			\hline
			$b^{i,k} := b^k - \zeta^i_k\,(\sigma^k)^2 - \int_0^{+\infty} {z(e^{\zeta^i_k z}-1)\pi^k(\ud z)}$ \\ 
			\hline
			$\sigma^{i,k} := \sigma^k$ \\
			\hline
			$\pi^{i,k}(\ud z) := e^{\zeta^i_k z}\pi^k(\ud z)$ \\
			\hline
			$b_Z^{i,k} := b_Z^k + \lambda^i_k\,(\sigma_Z^k)^2 + \int_{|z|<1} {z(e^{\lambda^i_k z}-1)\gamma_Z^k(\ud z)}$ \\
			\hline
			$\sigma_Z^{i,k} := \sigma_Z^k$ \\
			\hline
			$\gamma_Z^{i,k}(\ud z) := e^{\lambda^i_k z}\gamma_Z^k(\ud z)$ \\
			\hline
		\end{tabular}
		\caption{Parameter transformations from $\QQ$ to $\QQ^i$\\ for the CBITCL process $(X^k, Z^k)$.}
		\label{table:newparameterscbitclcurrency}
\end{table}

\begin{rem}
Financial models driven by CBITCL processes are inherently incomplete and, therefore, there exist infinitely many risk-neutral measures beyond the probability measures considered in Corollary \ref{thm:girsanovcbitclcurrency}. Our approach is motivated by the preservation of the structure of the model under each risk-neutral measure $\QQ^i$, for $i=1,\ldots,N$. In line with the martingale approach to financial modeling, the parameters characterizing the family $\{\QQ^i; i=1,\ldots,N\}$ are determined by calibration to market data (see Section \ref{sec:numericalcbitclcurrency} for a specific application). We refer to \cite{EK20} for an overview of several well-known hedging approaches in incomplete markets driven by jump processes.
\end{rem}

By combining \eqref{eq:artificialfxrate} and \eqref{eq:fxrate}, we obtain the following representation of FX rates:
\be	\label{eq:fxrate_new}
S_t^{i,j} 
= S^{i,j}_0e^{(r^i-r^j)t}\prod_{k=1}^{d} e^{(\zeta^j_k-\zeta^i_k)X_t^k + (\lambda^j_k-\lambda^i_k)Z_t^k -(\cK_t^{j,k} - \cK_t^{i,k})}, 
\qquad \text{ for all } t \in[0,\cT].
\ee
In particular, note that all FX rates $S^{i,j}$ share the same modeling structure, for all $i,j=1,\ldots,N$, and the driving processes $(X^k,Z^k)$, for $k=1,\ldots,d$, remain mutually independent CBITCL processes under each risk-neutral measure $\QQ^i$, as a consequence of Corollary \ref{thm:girsanovcbitclcurrency}. 
In particular, the functional form of the process $S^{i,j}$ under $\QQ^i$ is identical to that of its reciprocal $S^{j,i}$ under $\QQ^j$.

We have thus proved the next theorem, which shows that we have constructed a well-posed and coherent multiple currency market, in line with the modeling objectives set in Section \ref{sec:fxmarket}.

\begin{thm}
The multiple currency market $(\bm{D},\bm{S})$ is well-posed and coherent.
\end{thm}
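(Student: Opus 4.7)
The statement packages together the three requirements of Definition \ref{def:wellposedness} with the coherence property introduced in Remark \ref{rem:coherence}, and my plan is simply to collect what has effectively been proved during the construction.

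First, I would verify parts (i) and (ii) of Definition \ref{def:wellposedness} directly from the ratio definition \eqref{eq:fxrate}: the identity $S^{j,i}_t = S^{0,i}_t/S^{0,j}_t = 1/S^{i,j}_t$ gives inversion, and $S^{i,k}_t \, S^{k,j}_t = (S^{0,k}_t/S^{0,i}_t)(S^{0,j}_t/S^{0,k}_t) = S^{i,j}_t$ gives triangulation. These are pointwise algebraic identities, so no probabilistic argument is needed at this step.

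Second, for part (iii) I would invoke Corollary \ref{thm:girsanovcbitclcurrency}: the measure $\QQ^i$ defined via \eqref{eq:measurechangecbitclcurrency} was already shown to be equivalent to $\QQ$ and to make $S^{i,j}D^j/D^i$ a local martingale for every $j$, which is precisely the risk-neutrality condition for the $i\textsuperscript{th}$ economy as formulated in Section \ref{sec:fxmarket}.

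The most substantive task is coherence. I would exploit representation \eqref{eq:fxrate_new}, which writes $S^{i,j}$ as an exponential whose exponent is affine in $(X^k, Z^k, \cK^{i,k}, \cK^{j,k})$ with coefficients $\zeta^j_k - \zeta^i_k$ and $\lambda^j_k - \lambda^i_k$. Swapping the roles of $i$ and $j$ simply negates these coefficients and interchanges the two compensator terms, so $S^{j,i}$ admits the same structural representation. Combined with Corollary \ref{thm:girsanovcbitclcurrency}, which asserts that each $(X^k, Z^k)$ remains a CBITCL process (with parameters reported in Table \ref{table:newparameterscbitclcurrency}) under every $\QQ^i$, this shows that the functional form of $S^{i,j}$ under $\QQ^i$ is identical to that of $S^{j,i}$ under $\QQ^j$, which is exactly the coherence statement.

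I do not anticipate a serious obstacle, since the demanding analytical work (true martingale property of the Esscher density, parameter transformations, preservation of the CBITCL class under equivalent changes of measure) has already been carried out in Theorem \ref{thm:girsanovcbitcl} and Corollary \ref{thm:girsanovcbitclcurrency}. The only mild care needed is conceptual: one must read coherence as invariance of the modeling class under inversion combined with the change from $\QQ^i$ to $\QQ^j$, rather than as a distributional identity of a single FX rate. Representation \eqref{eq:fxrate_new} together with the structural invariance of CBITCL under the Esscher changes delivers precisely this reading.
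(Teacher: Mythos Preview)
Your proposal is correct and mirrors the paper's own argument: the paper also treats this theorem as a summary of what has already been established, noting that parts (i)--(ii) of Definition~\ref{def:wellposedness} follow directly from the ratio structure~\eqref{eq:fxrate}, part (iii) from Corollary~\ref{thm:girsanovcbitclcurrency}, and coherence from representation~\eqref{eq:fxrate_new} together with the preservation of the CBITCL class under each $\QQ^i$. There is no additional ingredient in the paper beyond what you have outlined.
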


For each $i=1,\ldots,N$, the Radon-Nikodym density $\ud\QQ^i/\ud\QQ$ defined in \eqref{eq:measurechangecbitclcurrency} admits the following representation in terms of the sources of randomness driving the extended Dawson-Li representation \eqref{eq:cbitclsde1}-\eqref{eq:cbitclsde2} of the CBITCL processes $(X^k,Z^k)$, for $k=1,\ldots,d$:
\begin{align*}
\frac{\ud\QQ^i}{\ud\QQ} &= \prod_{k=1}^{d}\cE\left(\zeta^i_k\,\sigma^k\int_0^{\cdot} {\sqrt{X_s^k}\,\ud B_s^{k,1}} + \lambda^i_k\,\sigma_Z^k\int_0^{\cdot} {\sqrt{X_s^k}\,\ud B_s^{k,2}} +  \int_0^{\cdot}\int_0^{+\infty} {(e^{\zeta^i_kx}-1)\tildeN_0^k(\ud s,\ud x)} \right)_{\cT}\\
&\quad\times\prod_{k=1}^{d}\cE\left( \int_0^{\cdot}\int_0^{X_{s-}^k}\!\int_0^{+\infty} {(e^{\zeta^i_k x}-1)\tildeN_1^k(\ud s,\ud u,\ud x)} + \int_0^{\cdot}\int_0^{X_{s-}^k}\!\int_{\R} {(e^{\lambda^i_kx}-1)\tildeN_2^k(\ud s,\ud u,\ud x)} \right)_{\cT}.
\end{align*}
By Girsanov's theorem, the processes $B^{i,k,1}=(B^{i,k,1}_t)_{t\in[0,\cT]}$ and $B^{i,k,2}=(B^{i,k,2}_t)_{t\in[0,\cT]}$ defined by
\be\label{eq:girsanov_bm}\begin{aligned}
B_t^{i,k,1} &:= B_t^{k,1} - \zeta^i_k\,\sigma^k\int_0^t {\sqrt{X_s^k}\,\ud s},\\
B_t^{i,k,2} &:= B_t^{k,2} - \lambda^i_k\,\sigma_Z^k\int_0^t {\sqrt{X_s^k}\,\ud s},
\end{aligned}\ee
are independent Brownian motions under $\QQ^i$. Moreover, $N_0^k(\ud t, \ud x)$, $N_1^k(\ud t, \ud u, \ud x)$, $N_2^k(\ud t, \ud u, \ud x)$ are Poisson random measures under $\QQ^i$ with compensated measures
\be\label{eq:girsanov_rndmeas}\begin{aligned}
\tildeN_0^{i,k}(\ud t, \ud x) &:= N_0^k(\ud t, \ud x) - \ud t\,e^{\,\zeta^i_k\,x}\,\nu^k(\ud x),\\
\tildeN_1^{i,k}(\ud t, \ud u, \ud x) &:= N_1^k(\ud t, \ud u, \ud x) - \ud t\,\ud u\,e^{\,\zeta^i_k\,x}\,\pi^k(\ud x),\\	
\tildeN_2^{i,k}(\ud t, \ud u, \ud x) &:= N_2^k(\ud t, \ud u, \ud x) - \ud t\,\ud u\,e^{\,\lambda^i_k\,x}\,\gamma_Z^k(\ud x).
\end{aligned}\ee

\begin{lem}	\label{lem:sdefxrate}
For each $i,j=1,\ldots,N$, the process $S^{i,j}$ satisfies the following dynamics under $\QQ^i$:
\begin{align} 
\frac{\ud S^{i,j}_t}{S^{i,j}_{t-}} &= (r^i - r^j)\ud t + \sum_{k=1}^{d} \sqrt{X_t^k}\,\Bigl( \sigma^k(\zeta^j_k - \zeta^i_k)\,\ud B_t^{i,k,1} + \sigma_Z^k\,(\lambda^j_k - \lambda^i_k)\,\ud B_t^{i,k,2} \Bigr)\notag\\
&\quad+ \sum_{k=1}^{d} \int_0^{+\infty} {\bigl(e^{(\zeta^j_k-\zeta^i_k)x}-1\bigr)\tildeN_0^{i,k}(\ud t,\ud x)}
+ \sum_{k=1}^{d} \int_0^{X_{t-}^k}\!\int_0^{+\infty} {\bigl(e^{(\zeta^j_k-\zeta^i_k)x}-1\bigr)\tildeN_1^{i,k}(\ud t,\ud u,\ud x)}\label{eq:sdefxrate}\\
&\quad+ \sum_{k=1}^{d} \int_0^{X_{t-}^k}\!\int_{\R} {\bigl(e^{(\lambda^j_k-\lambda^i_k)x}-1\bigr)\tildeN_2^{i,k}(\ud t,\ud u,\ud x)},
\notag
\end{align}
with the processes $B^{i,k,1}$, $B^{i,k,2}$ and the random measures $\widetilde{N}^{i,k}_0$, $\widetilde{N}^{i,k}_1$, $\widetilde{N}^{i,k}_2$  defined in \eqref{eq:girsanov_bm}-\eqref{eq:girsanov_rndmeas}.
\end{lem}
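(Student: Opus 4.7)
The plan is to apply It\^o's formula to the explicit exponential representation \eqref{eq:fxrate_new} of $S^{i,j}$, substitute the extended Dawson--Li dynamics \eqref{eq:cbitclsde1}--\eqref{eq:cbitclsde2} for each driving process $(X^k,Z^k)$, and finally re-express all compensated Brownian and Poisson stochastic integrals under the risk-neutral measure $\QQ^i$ by means of the Girsanov transformations \eqref{eq:girsanov_bm}--\eqref{eq:girsanov_rndmeas}.

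More concretely, I would first write $S_t^{i,j} = S_0^{i,j}\exp(U_t)$ with $U_t = (r^i-r^j)t + \sum_{k=1}^d\bigl[(\zeta^j_k-\zeta^i_k)X_t^k + (\lambda^j_k-\lambda^i_k)Z_t^k - (\cK^{j,k}_t - \cK^{i,k}_t)\bigr]$, and then compute $\ud S_t^{i,j}/S_{t-}^{i,j}$ via the It\^o formula for exponentials of semimartingales. The continuous martingale pieces come directly from the Brownian terms in \eqref{eq:cbitclsde1}--\eqref{eq:cbitclsde2}, weighted by $\sqrt{X_t^k}$ and by the coefficients $\sigma^k(\zeta^j_k-\zeta^i_k)$ or $\sigma_Z^k(\lambda^j_k-\lambda^i_k)$. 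The jumps, given by $\Delta S_t^{i,j}/S_{t-}^{i,j} = \exp\bigl((\zeta^j_k-\zeta^i_k)\Delta X^k_t + (\lambda^j_k-\lambda^i_k)\Delta Z^k_t\bigr)-1$, yield integrands of the form $e^{(\zeta^j_k-\zeta^i_k)x}-1$ against $N_0^k$ and $N_1^k$ and $e^{(\lambda^j_k-\lambda^i_k)x}-1$ against $N_2^k$, on the correct domains $(0,+\infty)^2$ and $(0,+\infty)\times[0,X^k_{t-}]$. The It\^o correction produces a quadratic-variation drift proportional to $X_t^k$, and, by Lemma \ref{lem:exponentialcompensator}, the compensator $\cK^{j,k}-\cK^{i,k}$ contributes an explicit drift determined by $\Psi^k(\zeta^j_k)-\Psi^k(\zeta^i_k)$, $\Phi^k(\zeta^j_k)-\Phi^k(\zeta^i_k)$ and $\Xi^k(\lambda^j_k)-\Xi^k(\lambda^i_k)$.

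The final step is to pass from $\QQ$ to $\QQ^i$ by substituting \eqref{eq:girsanov_bm}--\eqref{eq:girsanov_rndmeas} into this semimartingale decomposition; each Girsanov drift correction can be matched termwise against the It\^o quadratic-variation correction and against the corresponding piece of $\cK^{j,k}-\cK^{i,k}$, so that all drift contributions cancel except $(r^i-r^j)\,\ud t$. As an independent check on this cancellation, one observes that $S^{i,j}D^j/D^i = (S^{0,j}D^j)/(S^{0,i}D^i)$ is a $\QQ^i$-local martingale, by the standard numeraire change applied via Corollary \ref{thm:girsanovcbitclcurrency} together with the $\QQ$-martingale property in Lemma \ref{lem:generalmulticurrencymodel}; this immediately forces the drift of $\ud S_t^{i,j}/S_{t-}^{i,j}$ under $\QQ^i$ to equal $(r^i-r^j)\,\ud t$ and reduces the verification to identifying the martingale part with the one displayed in \eqref{eq:sdefxrate}. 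The main obstacle is thus purely computational bookkeeping: tracking signs, keeping the domains $[0,X^k_{t-}]$ of the $N_1^k,N_2^k$ integrals in place, and correctly assembling the quadratic-variation, compensator and Girsanov drift contributions; no conceptual step is needed beyond Lemma \ref{lem:exponentialcompensator} and Corollary \ref{thm:girsanovcbitclcurrency}.
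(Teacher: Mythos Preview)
Your proposal is correct and follows essentially the same route as the paper: It\^o's formula combined with the extended Dawson--Li representations and the Girsanov transformations \eqref{eq:girsanov_bm}--\eqref{eq:girsanov_rndmeas}. The only cosmetic difference is that the paper starts from the ratio $S^{i,j}=S^{0,j}/S^{0,i}$ and applies It\^o's product rule together with the already-computed dynamics \eqref{eq:sdeartificialfxrate} of the artificial FX rates, whereas you apply It\^o directly to the exponential representation \eqref{eq:fxrate_new}; the underlying computation is the same.
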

\begin{proof}
The claim follows from \eqref{eq:fxrate} by applying It\^o's product rule together with the dynamics \eqref{eq:sdeartificialfxrate} of the artificial FX rates $S^{0,i}$ and $S^{0,j}$, making use of the notation introduced above.
\end{proof}

In particular, we can notice that the dynamics of $S^{i,j}$ under $\QQ^i$ are functionally symmetric with respect to the dynamics of $S^{j,i}$ under $\QQ^j$, for all $i,j=1,\ldots,N$. This is a further evidence of the coherence of our modeling framework, in the sense of \cite{Gno17}.

As can be seen from equation \eqref{eq:sdefxrate}, in our framework FX rates possess stochastic dynamics that can capture the most relevant risk characteristics of the FX market, as discussed in the Introduction. More specifically, we can remark the following features (for simplicity of presentation, in the following discussion we consider a one-dimensional CBI process $X$):
\begin{description}
\item[Stochastic volatility] 
for all FX rates, both the diffusive volatility and the jump volatility are stochastic and depend on the current level of the CBI process $X$. In particular, since $X$ is a self-exciting process, this induces volatility clustering effects in the FX rates.
\item[Jump risk]
all FX rates are affected by three different jump terms, corresponding to the three integrals appearing on the right-hand side of \eqref{eq:sdefxrate}: the first integral results from the immigration of the CBI process $X$, the second integral is related to the branching property of $X$ and the third integral is generated by the L\'evy process defining the process $Z$, with a jump intensity proportional to  $X$. In particular, the last two integrals are affected by the self-exciting property of $X$ and can generate jump clusters in the dynamics of FX rates. The parameters $\zeta^j-\zeta^i$ and $\lambda^j-\lambda^i$ control the magnitude of these effects.
\item[Stochastic dependence of FX rates] 
the quadratic covariation among different FX rates exhibits a rich stochastic structure, with the presence of common jumps and with a jump intensity which is related to the current level of the self-exciting CBI process $X$. 
\item[Stochastic skewness]
the quadratic covariation $[S^{i,j},X]$ also has a rich stochastic structure. In turn, this induces a stochastic instantaneous correlation between each FX rate and its stochastic volatility. As explained in \cite{CHJ09,DFG11}, the presence of stochastic correlation is responsible for the existence of stochastic variations in the skew of FX implied volatilites. 
\end{description}

\subsection{Currency option pricing}
\label{sec:currencypricing}

The modeling framework constructed in Section \ref{sec:constructionmodelcbitclcurrency} is coherent and, therefore, retains full analytical tractability under all risk-neutral measures considered in Corollary \ref{thm:girsanovcbitclcurrency}. In particular,  we can derive an explicit representation of the characteristic function of each FX rate. In the following, we denote by $\EE^i$ the expectation under $\QQ^i$, for each $i=1,\ldots,N$.

\begin{lem}\label{lem:characteristicfunctionfxrate}
For all $i,j=1,\ldots,N$, the characteristic function of the process $(\log S_t^{i,j})_{t\in[0,\cT]}$ under $\QQ^i$ is given by
\be	\label{eq:char_fun}
\EE^i\bigl[ e^{\im u\log S_t^{i,j} } \bigr] 
= e^{\im u(\log S_0^{i,j} +(r^i - r^j)t)}\prod_{k=1}^d e^{\im u(\Psi^k(\zeta^i_k) - \Psi^k(\zeta^j_k))t+\cU^{i,k}(t, u_1^k, u_2^k, u_3^k) + \cV^{i,k}(t, u_1^k, u_2^k, u_3^k)X_0^k},
\ee
for all $(u,t) \in \R\times[0,\horizon]$, where $(\cU^{i,k}(\cdot, u_1^k, u_2^k, u_3^k), \cV^{i,k}(\cdot, u_1^k, u_2^k, u_3^k))$ is the unique solution to system \eqref{eq:Riccati1}-\eqref{eq:Riccati2} associated to $(X^k, Z^k)$ under $\QQ^i$ with 
\begin{equation*}
u_1^k = \im u(\zeta^j_k - \zeta^i_k), \quad u_2^k = \im u\bigl(\Phi^k(\zeta^i_k) + \Xi_Z^k(\lambda^i_k) - \Phi^k(\zeta^j_k) - \Xi_Z^k(\lambda^j_k)\bigr) \quad \text{and} \quad u_3^k = \im u(\lambda^j_k - \lambda^i_k).
\end{equation*}	
\end{lem}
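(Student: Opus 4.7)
The plan is to start from the explicit multiplicative representation \eqref{eq:fxrate_new} of $S_t^{i,j}$ and compute its characteristic function by exploiting the affine property of the CBITCL processes established in Proposition \ref{prop:cbitclaffine}. Under $\QQ^i$, Corollary \ref{thm:girsanovcbitclcurrency} guarantees that the processes $(X^k,Z^k)$, for $k=1,\ldots,d$, remain mutually independent CBITCL processes (with the parameter transformations of Table \ref{table:newparameterscbitclcurrency}), so the expectation of the product over $k$ factors into a product of expectations, each of which can be evaluated via the affine formula of Proposition \ref{prop:cbitclaffine} applied to the joint process $(X^k,Y^k,Z^k)$.

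More concretely, I would first take the logarithm of \eqref{eq:fxrate_new}, multiply by $\im u$, and substitute the explicit expression \eqref{eq:exponentialcompensator} of the compensators $\cK^{i,k}$ and $\cK^{j,k}$ from Lemma \ref{lem:exponentialcompensator}, obtaining
\[
\cK^{j,k}_t - \cK^{i,k}_t = t\bigl(\Psi^k(\zeta^j_k) - \Psi^k(\zeta^i_k)\bigr) + Y^k_t\bigl(\Phi^k(\zeta^j_k) + \Xi^k(\lambda^j_k) - \Phi^k(\zeta^i_k) - \Xi^k(\lambda^i_k)\bigr),
\]
with $Y^k_t=\int_0^t X^k_s\,\ud s$. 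After pulling out the deterministic factors $e^{\im u\log S_0^{i,j}}$, $e^{\im u(r^i-r^j)t}$ and $\prod_k e^{\im u(\Psi^k(\zeta^i_k)-\Psi^k(\zeta^j_k))t}$, what remains inside the expectation is a product $\prod_{k=1}^d\exp\bigl(u_1^k X_t^k + u_2^k Y_t^k + u_3^k Z_t^k\bigr)$, with $u_1^k$, $u_2^k$, $u_3^k$ matching exactly the values declared in the statement. Using the $\QQ^i$-independence of the $(X^k,Z^k)$ and applying Proposition \ref{prop:cbitclaffine} to each factor (recalling that $Y^k_0=Z^k_0=0$) then yields the claimed product representation.

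The only point requiring attention is that two distinct parameter sets coexist in the identity: the deterministic prefactor $e^{\im u(\Psi^k(\zeta^i_k)-\Psi^k(\zeta^j_k))t}$ and the argument $u_2^k$ are expressed in terms of the $\QQ$-characteristics $\Psi^k,\Phi^k,\Xi^k$ (because they originate from $\cK^{i,k},\cK^{j,k}$, which are defined under $\QQ$ via Lemma \ref{lem:exponentialcompensator}), while the Riccati solutions $(\cU^{i,k},\cV^{i,k})$ are those of the system \eqref{eq:Riccati1}-\eqref{eq:Riccati2} associated to the $\QQ^i$-characteristics of Table \ref{table:newparameterscbitclcurrency}. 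I do not expect a genuine obstacle: once the bookkeeping of the two parameter sets and the identification of $(u_1^k,u_2^k,u_3^k)$ are handled carefully, the result follows by a direct invocation of Proposition \ref{prop:cbitclaffine}.
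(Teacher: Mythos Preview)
Your proposal is correct and follows essentially the same route as the paper: use the representation \eqref{eq:fxrate_new} together with the explicit form \eqref{eq:exponentialcompensator} of the compensators, invoke the $\QQ^i$-independence of the CBITCL factors from Corollary \ref{thm:girsanovcbitclcurrency} to split the expectation, and then apply the affine transform formula of Proposition \ref{prop:cbitclaffine} to each factor. The bookkeeping point you flag about the coexistence of $\QQ$- and $\QQ^i$-characteristics is exactly right and causes no difficulty.
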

\begin{proof}
In view of \eqref{eq:fxrate_new} and \eqref{eq:exponentialcompensator}, we have that
\begin{align*}
\EE^i\bigl[ e^{\im u\log S_t^{i,j} } \bigr] &= e^{\im u(\log S_0^{i,j} + (r^i - r^j)t)}\prod_{k=1}^d e^{\im u(\Psi^k(\zeta^i_k) - \Psi^k(\zeta^j_k))t}\\
&\quad\times\prod_{k=1}^d\EE^i\left[e^{\im u(\zeta^j_k - \zeta^i_k)X_t^k + \im u(\Phi^k(\zeta^i_k) + \Xi_Z^k(\lambda^i_k) - \Phi^k(\zeta^j_k) - \Xi_Z^k(\lambda^j_k))Y_t^k + \im u(\lambda^j_k - \lambda^i_k)Z_t^k }\right],
\end{align*}
where we have used the independence of the CBITCL processes $(X^k,Z^k)$, for $k=1,\ldots,d$, under $\QQ^i$ (see Corollary \ref{thm:girsanovcbitclcurrency}) and $Y_t^k = \int_0^t {X_s^k\,\ud s}$, for all $k=1,\ldots,d$. Formula \eqref{eq:char_fun} then follows from an application of the affine transform formula given in Proposition \ref{prop:cbitclaffine}.
\end{proof}

The availability of an explicit description of the characteristic function of each FX rate allows for currency option pricing via Fourier techniques. We adopt the COS method of \cite{FOo09}, which presents the advantage of utilizing only the characteristic function of the process, without requiring any domain extensions as in other Fourier pricing methods. In our setting, such domain extensions would necessitate additional constraints on the parameters, potentially affecting the calibration results. 
We consider a European Call option in the $i\textsuperscript{th}$ economy written on the FX rate $S^{i,j}$, with maturity $T \leq \horizon$ and strike $K > 0$. 
We assume that the distribution of $\log S^{i,j}_T$ under $\QQ^i$ admits a density\footnote{By \cite[Theorem 16.6]{Williams}, the random variable $\log S^{i,j}_T$ admits a density under $\QQ^i$ if $\int_{\R}|\EE^i[e^{\im u\log S_T^{i,j} }]|\ud u<+\infty$. Under this assumption, the density can be recovered by Fourier inversion from the characteristic function of $\log S^{i,j}_T$ given in Lemma \ref{lem:characteristicfunctionfxrate}.}.
Since the multiple currency market is well-posed, we can apply risk-neutral valuation under $\QQ^i$ to compute the arbitrage-free price $C(T,K)$ at $t=0$ of the option:
\be	\label{eq:COS_integral}
C(T, K) = e^{-r^iT}\,\EE^i\bigl[(S_T^{i,j} - K)^{+}\bigr] = e^{-r^iT}\int_{\R} {K(e^x - 1)^{+} f_T^{i,j}(x)\,\ud x},
\ee
where $f_T^{i,j}$ represents the density function of $\log(S_T^{i,j}/K)$ under $\QQ^i$.
To compute the integral in \eqref{eq:COS_integral}, we introduce a suitably chosen truncation range $[a, b] \subset \R$ such that $C(T, K)$ can be approximated with good accuracy by
\[
C(T, K) \approx e^{-r^i\,T}\int_a^b {K(e^x - 1)^{+} f_T^{i,j}(x)\,\ud x}.
\]
The resulting pricing formula is stated in the next proposition, which follows by the same arguments presented in \cite[Section 2.1]{FOo09}. 

\begin{prop}\label{prop:currencypricing}
The arbitrage-free price $C(T, K)$ of a European call option written on the FX rate $S^{i,j}$, with maturity $T \leq \horizon$ and strike $K > 0$, can be approximated by
\be\label{eq:currencypricing}
C(T, K) \approx e^{-r^iT}\,K\sum_{k=0}^{M-1}\left(1 - \frac{\delta_0(k)}{2}\right)\Re\left(e^{\im\frac{k\pi}{a-b}\left(a + \log K\right)}\EE^i\Bigl[e^{\im\frac{k\pi}{b-a}\log S_T^{i,j}}\Bigr]\right)B_k,
\ee
where $\delta_0$ denotes the Kronecker delta at $0$, $M \in \N$, $B_0 = (e^{\,b} - 1 - b)/(b-1)$, and where 
\[
B_k = \frac{2}{b-a}\left(\frac{1}{1 + \left(\frac{k\,\pi}{b-a}\right)^2}\left((-1)^k\,e^{\,b} - \cos\left(\frac{k\,\pi\,a}{b-a}\right) + \frac{k\,\pi}{b-a}\sin\left(\frac{k\,\pi\,a}{b-a}\right)\right) - \frac{b-a}{k\,\pi}\sin\left(\frac{k\,\pi\,a}{b-a}\right)\right).
\]
\end{prop}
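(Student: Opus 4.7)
The plan is to apply the COS method of \cite{FOo09} to the risk-neutral pricing integral in \eqref{eq:COS_integral}, using the explicit characteristic function provided by Lemma \ref{lem:characteristicfunctionfxrate}. First, I would expand the density $f_T^{i,j}$ of $\log(S_T^{i,j}/K)$ under $\QQ^i$ into a Fourier cosine series on the truncation interval $[a,b]$ and truncate at $M$ terms:
\[
f_T^{i,j}(x) \;\approx\; \sum_{k=0}^{M-1}\Bigl(1-\tfrac{\delta_0(k)}{2}\Bigr) A_k \cos\!\left(k\pi\,\frac{x-a}{b-a}\right), \qquad A_k = \frac{2}{b-a}\int_a^b f_T^{i,j}(x)\cos\!\left(k\pi\,\frac{x-a}{b-a}\right)\ud x.
\]

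Second, I would approximate $A_k$ by extending the inner integral to all of $\R$ (which is admissible because $f_T^{i,j}$ is assumed to exist and is integrable, introducing only a tail error controlled by $\|f_T^{i,j}\|_{L^1(\R\setminus[a,b])}$) and recognize the resulting quantity as the real part of the characteristic function of $\log(S_T^{i,j}/K)$ under $\QQ^i$ evaluated at $k\pi/(b-a)$. Shifting back from $\log(S_T^{i,j}/K)$ to $\log S_T^{i,j}$ produces the multiplicative factor $e^{-\im(k\pi/(b-a))\log K}$, which combined with the standard COS phase factor $e^{-\im k\pi a/(b-a)}$ yields $\exp\bigl(\im\tfrac{k\pi}{a-b}(a+\log K)\bigr)$. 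Inserting the closed-form expression of $\EE^i[e^{\im u\log S_T^{i,j}}]$ from Lemma \ref{lem:characteristicfunctionfxrate} then supplies an explicit formula for the cosine coefficients.

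Third, I would substitute this cosine expansion back into \eqref{eq:COS_integral} and interchange the finite sum with the integral. Since the call payoff $(e^x-1)^+$ vanishes for $x<0$, and assuming the standard choice $a<0<b$, this reduces to computing, for each $k$, the coefficient
\[
B_k = \frac{2}{b-a}\int_0^b (e^x-1)\cos\!\left(k\pi\,\frac{x-a}{b-a}\right)\ud x,
\]
which I would evaluate in closed form by standard calculus: applying integration by parts twice to the term containing $e^x$, integrating $\int_0^b \cos(k\pi(x-a)/(b-a))\,\ud x$ directly, and simplifying using $\cos(k\pi)=(-1)^k$ and $\sin(k\pi)=0$. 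Collecting terms yields the stated expressions for $B_k$ (with the $k=0$ case, where the cosine equals one, treated separately and producing $B_0$).

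The technically routine but bookkeeping-heavy step is the explicit evaluation of $B_k$; the only conceptual obstacle is the justification of the two approximation errors implicit in the ``$\approx$'' symbol, namely (i) the truncation error from replacing the real line by $[a,b]$ when computing $A_k$ and when integrating the payoff against $f_T^{i,j}$, and (ii) the series-truncation error from keeping only $M$ terms. Both are controlled by the standard arguments of \cite[Section 2.1 and Theorems 3.1--3.2]{FOo09}, provided $[a,b]$ is chosen so that $f_T^{i,j}$ has negligible mass outside it, which in practice is done using the cumulants of $\log S_T^{i,j}$ obtainable by differentiating the characteristic function \eqref{eq:char_fun} at zero.
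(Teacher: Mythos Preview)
Your proposal is correct and follows exactly the route the paper takes: the paper simply states that the result ``follows by the same arguments presented in \cite[Section 2.1]{FOo09}'', and your outline reproduces precisely those arguments (cosine expansion of the density on $[a,b]$, approximation of the Fourier coefficients via the characteristic function from Lemma \ref{lem:characteristicfunctionfxrate}, and closed-form evaluation of the payoff integrals $B_k$). In fact you supply more detail than the paper itself, including the explicit handling of the phase shift from $\log(S_T^{i,j}/K)$ to $\log S_T^{i,j}$ and the error discussion via \cite[Theorems 3.1--3.2]{FOo09}.
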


\begin{rem}\label{rem:truncationrange}
(1) In order to ensure the accuracy of formula \eqref{eq:currencypricing}, one needs to specify the truncation range $[a, b]$ properly. Following \cite[Section 5.1]{FOo09}, a suitable specification is the following:
\[
[a, b] = \left[c_1 - L\,\sqrt{c_2 + \sqrt{c_4}}, \quad c_1 + L\,\sqrt{c_2 + \sqrt{c_4}}\right],
\] 
with $L = 10$ and where $c_n$, for $n = 1, 2, 4$, represents the $n\textsuperscript{th}$ cumulant of $\log(S_T^{i,j}/K)$. In our framework, the cumulants are not available in closed form. However, they can be approximated by using finite differences since, by definition, they are given by the derivatives at zero of the cumulant-generating function of $\log(S_T^{i,j}/K)$ (see \cite[Appendix A]{FOo09} for further details).

(2) As explained in \cite[Section 3.3]{FOo09}, formula \eqref{eq:currencypricing} can be readily extended to a multi-strike setting, which is practically important when one needs to price several options with the same maturity but associated to different strikes (e.g., for model calibration).
We refer to \cite[Remark 5.13]{phdthesis_szulda} for a description of the multi-strike implementation of formula \eqref{eq:currencypricing}.
\end{rem}

\section{Model calibration}\label{sec:numericalcbitclcurrency}

In this section, we calibrate a simple specification of our modeling framework to market data on a currency triangle. We consider a model driven by  tempered $\alpha$-stable CBI processes and CGMY L\'evy processes (see \cite{FGS21}) and propose two different calibration methods, one based on standard techniques and one relying on a deep learning algorithm. The market data are described in Section \ref{sec:fxmarketdata}, while the two calibration methods are presented in Section \ref{sec:calibrationtwotypes}. Section \ref{sec:specification} contains a description of the model specification and in Section \ref{sec:fxcalibrationresults} we report the calibration results.

\subsection{FX market data}\label{sec:fxmarketdata}

We consider market data on three FX implied volatility surfaces: EUR-USD, EUR-JPY and USD-JPY (according to the FOR-DOM convention, the second currency of each pair represents the domestic currency). The quoting convention for FX implied volatilities differs from the case of equity markets, since implied volatilities are quoted in terms of deltas and maturities instead of strikes and maturities. Moreover, excluding ATM options, individual volatilities are not directly quoted: the market practice consists in quoting certain combinations of contracts (risk-reversals and butterflies) from which implied volatilities for single contracts in terms of maturities and deltas have to be recovered.

For the three volatility surfaces, we consider a common set of maturities, ranging from one week to one year (1, 2 weeks, 1, 3, 6 months, and 1 year, representing the most liquid part of the implied volatility surface). We retrieved from Bloomberg the following market quotes as of April 15, 2020: ATM implied volatility, $10\Delta$ and $25\Delta$ risk-reversals\footnote{By $25\Delta$ risk-reversal, we mean an OTM Call option with a delta of $25\%$ and a Put option with a delta of $-25\%$.} and butterflies. For $25\Delta$, we have
\[
RR_{25\Delta} = \sigma_{25\Delta Call} - \sigma_{25\Delta Put},
\qquad\text{ and }\qquad
BF_{25\Delta} = \frac{\sigma_{25\Delta Call} + \sigma_{25\Delta Put}}{2} - \sigma_{A T M},
\]
from which we deduce 
\[
\sigma_{25\Delta Call} = \sigma_{A T M}+\frac{1}{2}\,RR_{25\Delta}+BF_{25\Delta}
\qquad\text{ and }\qquad
\sigma_{25\Delta Put} = \sigma_{A T M}-\frac{1}{2}\,RR_{25\Delta}+BF_{25\Delta},
\]
and similarly for $10\Delta$. For each currency pair and for each maturity, we have the implied volatilities of 5 contracts at our disposal. Market data not corresponding to the 5 points is interpolated. 

In order to reconstruct observed market prices, we also retrieved from Bloomberg FX spots and FX forward points, which enable us to build FX forward curves by adding the spot and the forward points. Equipped with such data, we have all the information needed to convert deltas into strikes and recover implied volatilities for single contracts in terms of maturities and strikes.\footnote{We performed these tasks by using the open-source Java library Strata by OpenGamma, available at \url{https://github.com/OpenGamma/Strata}.}

\subsection{Two calibration methods}\label{sec:calibrationtwotypes}

Let $p$ denote a vector of model parameters, belonging to some set of admissible parameters $\cP$. Let $\# T$ be the number of maturities and $\# K$ be the number of strikes that we consider. For simplicity of presentation, we assume that all volatility surfaces have the same strike range and the same number of strikes. 
In general, a calibration to the implied volatilites on a set of $N$ currencies consists in solving the following minimization problem:
\be\label{eq:fxcalibrationproblem}
\min_{p\in\cP}\sum_{u=1}^N\sum_{i = 1}^{\# T}\sum_{j =1}^{\# K}\left(\sigma_{imp}^{mkt}(u,T_i,K_j)-\sigma_{imp}^{mod(p)}(u,T_i,K_j)\right)^2,
\ee
where $\sigma_{imp}^{mkt}(u,T_i,K_j)$ denotes the implied volatility observed on the market for currency $u$, maturity $T_i$, and strike $K_j$, while $\sigma_{imp}^{mod(p)}(u,T_i,K_j)$ denotes its model-implied counterpart for a given vector of parameters $p \in \cP$. 

We now present two calibration methods. The first one, to which we refer as \emph{standard calibration}, utilizes pricing formula \eqref{eq:currencypricing} to compute model prices for a given choice of model parameters. Such prices are then converted into model-implied volatilities and inserted into \eqref{eq:fxcalibrationproblem}. This gives rise to a multi-dimensional function $\Sigma:\cP\to\R^{N\times\# T\times\# K}$ such that, for all $p\in\cP$, we have $\Sigma(p)_{(u,i,j)} = \sigma_{imp}^{mod(p)}(u,T_i,K_j)$, for every $u=1,\ldots,N$, $i=1,\ldots,\# T$, and $j=1,\ldots,\# K$.

The second calibration method, to which we refer as \emph{deep calibration}, adopts the two-step approach developed by \cite{HMT21} for the solution of \eqref{eq:fxcalibrationproblem}. We proceed as follows.
\begin{description}
\item[Grid-based implicit training] the purpose of this step is to approximate the non-linear function $\Sigma$ by a fully-connected feed-forward neural network $\cN^w:\cP\to\R^{N\times\# T\times\# K}$ (see \cite[Definition 1]{HMT21}), where $w$ denotes a vector of network parameters (typically weights and biases). We divide this step into two sub-steps:
\begin{enumerate}
\item We generate a training set $\{(p_n, \Sigma(p_n))\}_{n=1,\ldots,N_{train}}$ of size $N_{train}$, where each vector of parameters $p_n$ is generated randomly by means of a standard random generator (suitable adjustments can be made to guarantee that parameter restrictions are satisfied), and where we have fixed the grid $(u, T_i, K_j)$, $u=1,\ldots,N$, $i=1,\ldots,\# T$, and $j=1,\ldots,\# K$, throughout the generation (hence the term ``grid-based'').
\item We solve the following minimization problem called ``training'' of the neural network:
\be\label{eq:trainingproblem}
\min_{w}\sum_{n=1}^{N_{train}}\sum_{u=1}^N\sum_{i = 1}^{\# T}\sum_{j =1}^{\# K}\Bigl(\Sigma(p_n)_{(u,i,j)}-\cN^w(p_n)_{(u,i,j)}\Bigr)^2,
\ee
whose solution is an optimal vector of network parameters $\widehat{w}$ such that the neural network $\cN := \cN^{\widehat{w}}$ best approximates the observations $\{\Sigma(p_n)\}_{n=1,\ldots,  N_{train}}$. Notice that $\widehat{w}$ depends on the grid that we have fixed, thus explaining the term ``implicit''.
\end{enumerate}
\item[Deterministic calibration] we rewrite \eqref{eq:fxcalibrationproblem} with the trained neural network $\cN$ as follows:  
\be\label{eq:fxdeepcalibrationproblem}
\min_{p\in\cP}\sum_{u=1}^N\sum_{i = 1}^{\# T}\sum_{j =1}^{\# K}\left(\sigma_{imp}^{mkt}(u,T_i,K_j)-\cN(p)_{(u,i,j)}\right)^2.
\ee
\end{description}
Following \cite{HMT21}, we adopt the following neural network architecture:
\begin{itemize}
\item 3 hidden layers with 30 nodes on each;
\item $N =3$ surfaces, all sharing the same maturity range of size $\# T = 6$ and the same number of strikes $\# K = 5$. This yields an output layer of $3 \times 6 \times 5 = 90$ nodes. The size of the input layer is simply the number of model parameters;
\item On the input and hidden layers, we employ the Exponential Linear Unit (ELU) activation function. The output layer is in turn equipped with the Sigmoid function.
\end{itemize}

Figure \ref{fig:neuralnet} provides a visualization of the neural network architecture. 
	\begin{figure}[h!]
		\centering
		\includegraphics[scale=0.2]{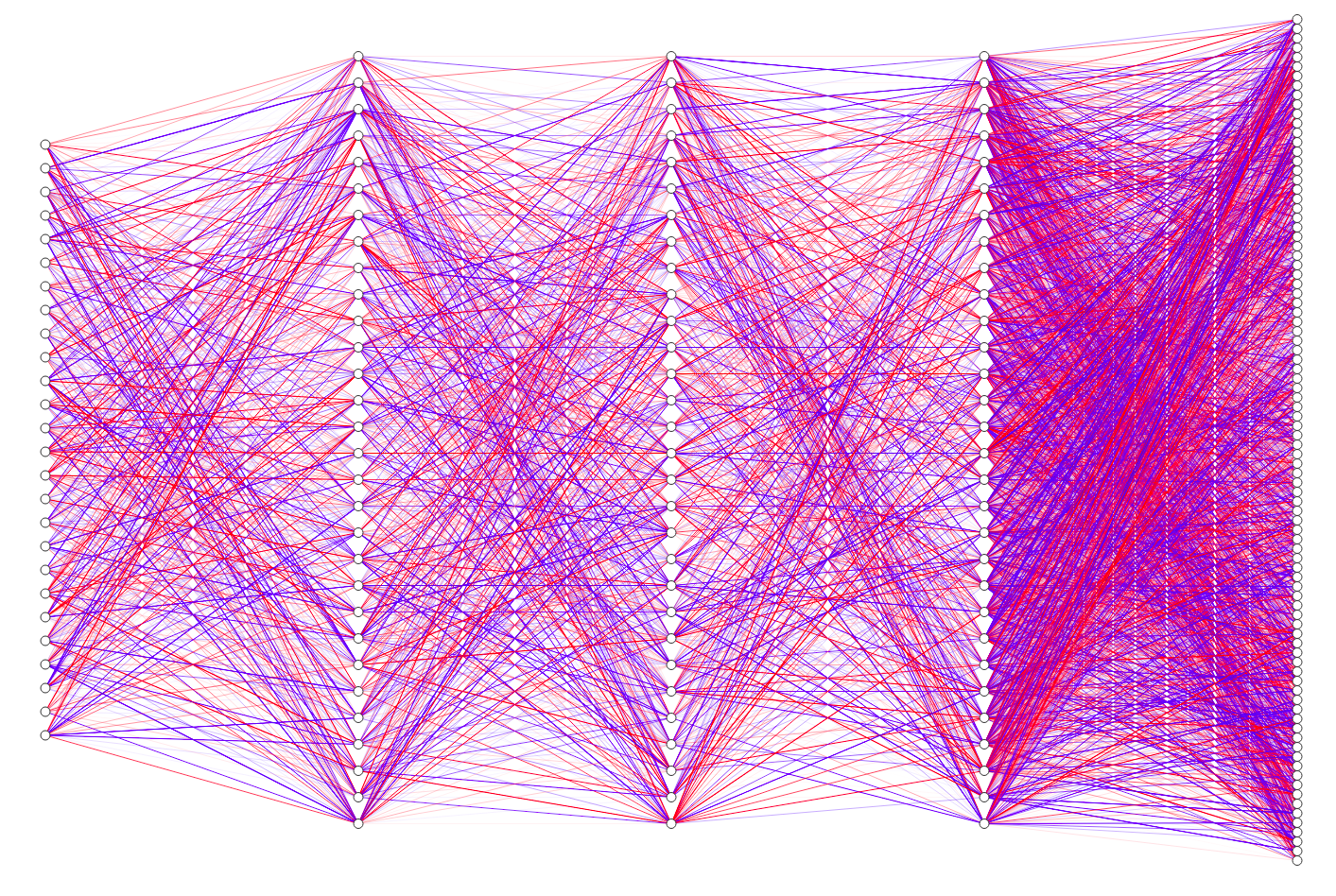}
		\caption{Illustration of the chosen neural network architecture, where all weights have been generated randomly. The width of an edge is proportional to its weight. The color of an edge defines the sign of the weight (red if positive, blue if negative). The figure has been generated using the method of \cite{Lenail19}.}
		\label{fig:neuralnet}
	\end{figure}
	
For the training step, we start with the random generation of a training set of size $N_{train} = 10.000$. After normalization, we proceed with the training of the neural network by solving \eqref{eq:trainingproblem}. The common practice is to use a stochastic optimization algorithm based on ``mini-batch'' gradient descent (see \cite{GBC16}), whose updater can be specified following the \emph{Adam} scheme (see \cite{KB17}).  We set the mini-batch size to 32 and the number of epochs to 150 with potential early stopping\footnote{We rely on the open-source Java library Eclipse Deeplearning4j, available at \url{http://deeplearning4j.org}.}.
	
\subsection{CBITCL specification}	\label{sec:specification}

We consider a specification of the modeling framework described in Section \ref{sec:constructionmodelcbitclcurrency} driven by two independent CBITCL processes $(X^k,Z^k)$, $k=1,2$, where
\begin{enumerate}
\item[(i)] $X^1$ and $X^2$ are tempered $\alpha$-stable CBI processes, as defined in \cite{FGS21};
\item[(ii)] the L\'evy processes $L^1$ and $L^2$ generating the processes $Z^1$ and $Z^2$ (see Definition \ref{def:cbitcl}), respectively, are CGMY processes, as introduced in \cite{CGMY02}.
\end{enumerate}

We recall from \cite{FGS21} that a CBI process $X=(X_t)_{t\in[0,\cT]}$ is said to be {\em tempered $\alpha$-stable} if its immigration mechanism  \eqref{eq:immigrationmechanism} reduces to $\Psi(x)=\beta x$ and the measure $\pi$ in the branching mechanism \eqref{eq:branchingmechanism} corresponds to the L\'evy measure of a spectrally positive tempered $\alpha$-stable compensated L\'evy process. More specifically, we set
\[
\pi(\ud z) = C_{\alpha}\,\eta^{\alpha}\frac{e^{-\frac{\theta}{\eta}z}}{z^{1+\alpha}}\ind_{\{z>0\}}\ud z,
\]
where $\eta>0$, $\theta\geq0$, $\alpha\in(-\infty,2)$ (restricted to $\alpha\in(1,2)$ if $\theta=0$) and $C_{\alpha}$ is a normalization constant.
This family of processes represents the tempered version of $\alpha$-stable CBI processes, which have been successfully applied in finance in recent years (see, e.g., \cite{JMS17,JMSS19,JMSZ21}).
The parameter $\alpha$ is referred to as the {\em stability index} and determines the jump behavior of the process $X$ (see \cite[Section 3.1]{FGS21}):
\begin{itemize}
\item if $\alpha<0$, then $X$ has jumps of finite activity and finite variation;
\item if $\alpha\in[0,1)$, then $X$ has jumps of infinite activity and finite variation;
\item if $\alpha\in[1,2)$, then $X$ has jumps of infinite activity and infinite variation.
\end{itemize}
In the present setting, we shall consider the case $\alpha\in(1,2)$ and specify the normalization constant as $C_{\alpha}=1/\Gamma(-\alpha)$.
The jumps of the process $X$ are tempered exponentially depending on the value of the parameter $\theta$, while the parameter $\eta$ serves as a volatility coefficient controlling the jump volatility of the process $X$.
The following lemma provides the explicit representation of the branching mechanism $\Phi$ of a tempered $\alpha$-stable CBI process $X$ (we refer to \cite{phdthesis_szulda} for a proof). We denote by $\Gamma$ the Gamma function extended to $\R\setminus\mathbb{Z}_-$ (see \cite{Lebedev}).

\begin{lem}\label{lem:temperedstablecbi}
For a tempered $\alpha$-stable CBI process $X$ with $\eta > 0$, $\theta \geq 0$, $C_{\alpha}=1/\Gamma(-\alpha)$ and $\alpha \in (1,2)$, the set $\cD_X$ defined in \eqref{eq:domaincbi} is given by $\cD_X=(-\infty,\theta/\eta]$. Moreover, the branching mechanism $\Phi$ is given by
\[
\Phi(x) = -\,b\,x + \frac{1}{2}\,(\sigma\,x)^2 + (\theta - \eta\,x)^{\alpha} - \theta^{\alpha} + \alpha\,\theta^{\alpha- 1}\,\eta\,x,
\qquad\text{ for all }x \leq \theta/\eta.
\]
\end{lem}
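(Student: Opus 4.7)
The plan is to dispatch the two assertions in order: first identify the effective domain $\cD_X$ from the tail behaviour of $\pi$, and then evaluate the integral in \eqref{eq:branchingmechanism} in closed form using a standard tempered stable identity.

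\textbf{Step 1 (Computation of $\cD_X$).} Since $X$ is tempered $\alpha$-stable we have $\nu\equiv 0$ in \eqref{eq:immigrationmechanism}, so \eqref{eq:domaincbi} reduces to the condition $\int_1^{+\infty}e^{xz}\pi(\ud z)<+\infty$. Inserting the explicit form of $\pi$, this becomes
\[
\frac{\eta^{\alpha}}{\Gamma(-\alpha)}\int_1^{+\infty}\frac{e^{(x-\theta/\eta)z}}{z^{1+\alpha}}\,\ud z<+\infty,
\]
and a direct inspection of the exponential factor shows that this integral is finite if and only if $x\leq\theta/\eta$, yielding $\cD_X=(-\infty,\theta/\eta]$. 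Note that when $\theta=0$ we additionally need $\alpha\in(1,2)$ to control $\int_0^1 z^2\pi(\ud z)$, which is precisely the restriction assumed.

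\textbf{Step 2 (Key integral identity).} The only non-trivial piece is to show that, for every $\lambda>0$ and every $x<\lambda$,
\begin{equation}\label{eq:keyid}
I(x):=\int_0^{+\infty}\bigl(e^{xz}-1-xz\bigr)\frac{e^{-\lambda z}}{z^{1+\alpha}}\,\ud z \;=\; \Gamma(-\alpha)\bigl[(\lambda-x)^{\alpha}-\lambda^{\alpha}+\alpha\lambda^{\alpha-1}x\bigr].
\end{equation}
I would prove this by differentiating twice under the integral sign, justified by dominated convergence on a neighbourhood of any $x_0<\lambda$ (the factor $z^{1-\alpha}e^{-(\lambda-x)z}$ is integrable for $\alpha\in(1,2)$). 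Differentiation gives
\[
I''(x)=\int_0^{+\infty}z^{1-\alpha}e^{-(\lambda-x)z}\,\ud z=(\lambda-x)^{\alpha-2}\Gamma(2-\alpha),
\]
while the right-hand side of \eqref{eq:keyid} has second derivative $\alpha(\alpha-1)\Gamma(-\alpha)(\lambda-x)^{\alpha-2}$. These agree by the functional equation $\Gamma(2-\alpha)=(1-\alpha)(-\alpha)\Gamma(-\alpha)=\alpha(\alpha-1)\Gamma(-\alpha)$. Matching the value $I(0)=0$ and the first derivative $I'(0)=0$ against the right-hand side of \eqref{eq:keyid} (both of which vanish at $x=0$) yields \eqref{eq:keyid} on $(-\infty,\lambda)$. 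The boundary value $x=\lambda$ (relevant when $\theta>0$) follows by monotone convergence on the left, since $(\lambda-x)^{\alpha}\to 0$ and the integrand in $I(x)$ is monotone in $x$ near the boundary.

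\textbf{Step 3 (Assembly).} Setting $\lambda=\theta/\eta$ and multiplying \eqref{eq:keyid} by the normalising factor $\eta^{\alpha}/\Gamma(-\alpha)$ gives
\[
\int_0^{+\infty}\bigl(e^{xz}-1-xz\bigr)\pi(\ud z)=\eta^{\alpha}\bigl[(\theta/\eta-x)^{\alpha}-(\theta/\eta)^{\alpha}+\alpha(\theta/\eta)^{\alpha-1}x\bigr]=(\theta-\eta x)^{\alpha}-\theta^{\alpha}+\alpha\theta^{\alpha-1}\eta x,
\]
valid for $x\leq\theta/\eta$. Substituting this into \eqref{eq:branchingmechanism} yields the claimed expression for $\Phi$. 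The case $\theta=0$ is recovered in the same way: \eqref{eq:keyid} holds verbatim with $\lambda=0$ (or can be obtained by letting $\lambda\downarrow 0$ in the final formula) and gives $\Phi(x)=-bx+\tfrac{1}{2}(\sigma x)^{2}+(-\eta x)^{\alpha}$ for $x\leq 0$, consistent with the stated formula.

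The main technical obstacle is the justification of the differentiation under the integral sign and the handling of the boundary $x=\theta/\eta$; once \eqref{eq:keyid} is established, the rest is straightforward algebra.
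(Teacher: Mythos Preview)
Your argument is correct and self-contained. Note, however, that the paper does not actually give a proof of this lemma: it simply refers the reader to \cite{phdthesis_szulda}. Your route---differentiating $I(x)$ twice, recognising the resulting Gamma integral, and matching boundary data at $x=0$---is the standard derivation of the tempered-stable closed form, and all the analytic justifications (dominated convergence for the differentiation, monotone convergence at the boundary $x=\theta/\eta$) are sound.

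One small slip in a parenthetical: when $\theta=0$, the restriction $\alpha\in(1,2)$ is not needed to control $\int_0^1 z^2\,\pi(\ud z)$ (that integral is finite for every $\alpha<2$), but rather to ensure the first-moment condition $\int_1^{+\infty} z\,\pi(\ud z)<+\infty$ imposed on $\pi$ in the setup preceding \eqref{eq:branchingmechanism}. This does not affect the main argument.
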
 
 
Let us also recall from \cite{CGMY02} that a L\'evy process $L=(L_t)_{t\in[0,\cT]}$ is of CGMY type if its L\'evy measure $\gamma$ is given by
\[
\gamma(\ud z) = C_L\bigl(z^{-1-Y}\,e^{-\,M\,z}\,\ind_{\{z>0\}} + |z|^{-1-Y}\,e^{-\,G\,|z|}\,\ind_{\{z<0\}}\bigr)\ud z, 
\]
where we fix $C_L=1/\Gamma(-Y)$.
The parameter $G > 0$ tempers the downward jumps of $L$, while $M > 0$ tempers the upward jumps, and $Y \in (1,2)$ controls the local behavior of $L$ in a similar way to the parameter $\alpha$ above.
We recall that the L\'evy exponent $\Xi$ of a CGMY process is of the form
\be\label{eq:cbitclcurrencyinfiniteactivitylevykhintchine}
\Xi(u) := \beta\,u + \int_{\R} {(e^{zu} - 1 - zu)\gamma(\ud z)}, \qquad\text{ for all } u \in \im\R,
\ee
for $\beta\in\R$.
It can be easily checked that, in the case of a CGMY process, the set $\cD_Z$ defined in \eqref{eq:domaincbitcl} is given by $\cD_Z=[-G,M]$. The L\'evy exponent \eqref{eq:cbitclcurrencyinfiniteactivitylevykhintchine} then takes the following explicit form:
\[
\Xi(u) = \beta u + (M - u)^{Y} - M^{Y} + (G + u)^{Y} - G^{Y} + u\,Y\,(M^{Y - 1} - G^{Y - 1}),
\quad \text{ for all }u\in[-G,M].
\end{equation*}

As discussed in Section \ref{sec:currencypricing}, the pricing of currency option requires the transformation of the model under $\QQ^i$, for $i=1,2,3$, where each $\QQ^i$ represents the risk-neutral measure associated to the $i\textsuperscript{th}$ economy and is given by \eqref{eq:measurechangecbitclcurrency}. In the present model specification, Theorem  \ref{thm:girsanovcbitclcurrency} directly implies the following result, which shows that not only the general CBITCL structure, but also the tempered $\alpha$-stable property of $X^k$ and the CGMY structure of $L^k$, for $k=1,2$, is preserved.

\begin{cor}
Under the model specification considered in this section, let  $\QQ^i$ be the probability measure defined in \eqref{eq:measurechangecbitclcurrency}, for each $i=1,\ldots,N$. Then,  the processes $(X^k,Z^k)$, $k=1,2$, remain independent CBITCL processes under $\QQ^i$ and such that
\begin{enumerate}
\item[(i)] $X^k$ is a tempered $\alpha$-stable CBI process with tempering parameter $\theta^{i,k}=\theta^k-\zeta^i_k\eta^k$;
\item[(ii)] $Z^k=L^k_{Y^k}$, where $L^k$ is a CGMY process with tempering parameters $G^{i,k}=G^k+\lambda^i_k$ and $M^{i,k}=M^k-\lambda^i_k$.
\end{enumerate}
\end{cor}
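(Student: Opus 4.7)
The plan is to invoke Corollary \ref{thm:girsanovcbitclcurrency} as the main engine and then check that, under the present parametric specification, the parameter transformations listed in Table \ref{table:newparameterscbitclcurrency} preserve not only the general CBITCL structure but also the specific tempered $\alpha$-stable and CGMY functional forms. The independence of the pairs $(X^k,Z^k)$ under $\QQ^i$ is already part of the conclusion of Corollary \ref{thm:girsanovcbitclcurrency}, so only parts (i) and (ii) require separate verification, and each reduces to a direct computation on the corresponding L\'evy measure.

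For (i), I would observe that, for the tempered $\alpha$-stable CBI process $X^k$, the immigration mechanism reduces to $\Psi^k(x)=\beta^k x$, i.e. $\nu^k\equiv 0$. By Table \ref{table:newparameterscbitclcurrency} this implies $\beta^{i,k}=\beta^k$ and $\nu^{i,k}\equiv 0$, so the immigration structure is trivially preserved. The branching L\'evy measure transforms as $\pi^{i,k}(\ud z)=e^{\zeta^i_k z}\pi^k(\ud z)$, which under the specification
\[
\pi^k(\ud z)=\frac{1}{\Gamma(-\alpha^k)}(\eta^k)^{\alpha^k}\frac{e^{-(\theta^k/\eta^k)z}}{z^{1+\alpha^k}}\ind_{\{z>0\}}\ud z
\]
yields exactly the tempered $\alpha$-stable L\'evy measure with stability index $\alpha^k$, volatility coefficient $\eta^k$ and tempering parameter $\theta^{i,k}:=\theta^k-\zeta^i_k\eta^k$; the normalization constant $1/\Gamma(-\alpha^k)$ is unchanged because $\alpha^k$ is unchanged. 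Since $\zeta^i_k\in\cD^{\circ}_{X^k}=(-\infty,\theta^k/\eta^k)$ by Lemma \ref{lem:temperedstablecbi} and the standing hypothesis, one has $\theta^{i,k}>0$, so the resulting measure is admissible.

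For (ii), I would perform the analogous computation on the CGMY L\'evy measure $\gamma^k_Z$ of $L^k$. By Table \ref{table:newparameterscbitclcurrency}, $\gamma^{i,k}_Z(\ud z)=e^{\lambda^i_k z}\gamma^k_Z(\ud z)$, and splitting into the positive and negative jump parts one obtains
\[
e^{\lambda^i_k z}\cdot e^{-M^k z}=e^{-(M^k-\lambda^i_k)z}\quad(z>0),\qquad
e^{\lambda^i_k z}\cdot e^{-G^k|z|}=e^{-(G^k+\lambda^i_k)|z|}\quad(z<0),
\]
so that $\gamma^{i,k}_Z$ is again a CGMY L\'evy measure, with the same $Y^k$ and normalization $1/\Gamma(-Y^k)$ but with tempering parameters $M^{i,k}=M^k-\lambda^i_k$ and $G^{i,k}=G^k+\lambda^i_k$. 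Admissibility ($M^{i,k},G^{i,k}>0$) follows from $\lambda^i_k\in\cD^{\circ}_{Z^k}=(-G^k,M^k)$.

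The only non-routine point is to convince oneself that the process $L^k$ built from $\gamma^{i,k}_Z$ really plays the role of the underlying L\'evy process in the Dawson--Li representation of $Z^k$ under $\QQ^i$, but this is precisely the content of Corollary \ref{thm:girsanovcbitclcurrency} (which in turn rests on Theorem \ref{thm:girsanovcbitcl} and Proposition \ref{lem:weakequivalencecbitcl}): the transformed triplet $(b_Z^{i,k},\sigma_Z^{i,k},\gamma_Z^{i,k})$ is the L\'evy triplet of the new time-changed L\'evy driver, so identifying its measure with a CGMY measure is enough. No serious obstacle is expected; the proof is essentially a bookkeeping exercise on Esscher-transformed L\'evy measures, and the ``hardest'' part is simply checking that the chosen domain conditions $\zeta^i_k\in\cD^{\circ}_{X^k}$ and $\lambda^i_k\in\cD^{\circ}_{Z^k}$ guarantee strict positivity of the new tempering parameters.
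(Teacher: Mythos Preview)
Your proposal is correct and follows exactly the approach the paper intends: the paper simply states that the result ``directly implies'' from Corollary~\ref{thm:girsanovcbitclcurrency}, without spelling out the Esscher computations on $\pi^k$ and $\gamma_Z^k$ that you carry out explicitly. Your additional verification that $\zeta^i_k\in\cD^{\circ}_{X^k}$ and $\lambda^i_k\in\cD^{\circ}_{Z^k}$ force $\theta^{i,k}>0$ and $G^{i,k},M^{i,k}>0$ is a welcome detail the paper leaves implicit.
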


Moreover, the drift term $b^{i,k}$ and $b_Z^{i,k}$ given in Table \ref{table:newparameterscbitclcurrency} can be explicitly computed as follows (see \cite[Chapter 2]{phdthesis_szulda} for further details), for $i=1,\ldots,N$ and $k=1,2$:
\begin{align*}
b^{i,k} &= b_k - \zeta^i_k\,\sigma_k^2 - \alpha_k\,\eta_k^{\alpha_k}\Bigl(\theta_k^{\alpha_k - 1} - (\theta_k - \zeta^i_k\,\eta_k)^{\alpha_k - 1}\Bigr),\\
\beta_Z^{i,k} &= \beta_Z^k + Y^k\,\Bigl( (M^k)^{Y^k - 1} - (M^k - \lambda^i_k)^{Y^k - 1} + (G^k)^{Y^k - 1} - (G^k + \lambda^i_k)^{Y^k - 1} \Bigr).
\end{align*}

\subsection{Calibration results}\label{sec:fxcalibrationresults}

For the resolution of \eqref{eq:fxcalibrationproblem} and \eqref{eq:fxdeepcalibrationproblem}, we use the Levenberg--Marquardt optimizer of the open-source Java library Finmath\footnote{Available at \url{https://www.finmath.net/finmath-lib}.}. 
We perform standard and deep calibrations. For the standard one, we obtain a root-mean-square error of 0.07557 in 709.977 seconds. Figure \ref{fig:cgmystandard} shows a satisfactory fit that slightly worsens for longer maturities. The deep calibration outperforms the standard one, achieving a root-mean-square error of 0.04092 in 0.269 seconds. The better quality of the fit can be seeen from Figure \ref{fig:cgmydeep}, where we can observe an improvement for longer maturities. Moreover, the execution time of the deep calibration is much smaller. However, one should take into account the time required for the training step, which may last up to several hours.
 
The calibrated values of the parameters are reported in Table \ref{table:fxcalibratedparameters}. In particular, we can notice that the calibrated values of $\alpha_1$ and $\alpha_2$ are rather close to 1, indicating the potential presence of jump clustering phenomena (see \cite[Section 3.1]{FGS21} for a detailed discussion of this aspect and for an analogous empirical evidence in multi-curve interest rate markets).
We can also observe that the differences $\zeta_{\uEUR}-\zeta_{\uUSD}$, $\zeta_{\uEUR}-\zeta_{\uJPY}$, $\zeta_{\uUSD}-\zeta_{\uJPY}$ show evidence of moderate dependence between the FX rates considered and their volatility, in line with the findings of \cite{BM18}. By inspecting the calibrated values of the parameters of the tempered $\alpha$-stable CBI processes, we can also notice a non-trivial contribution from the self-exciting jumps.
	 
It is interesting to remark that the calibrated values of the parameters are stable across the two types of calibration. This is in accordance with the order of the two calibration exercises: first, we performed the deep calibration, where the initial guess was generated randomly by employing the same random generator that we used for the generation of the training set. We then used the optimal set of parameters obtained from this first calibration as the initial guess of the standard calibration. The fact that the output of the standard calibration is in line with that of the deep calibration provides us with a way of validating, in the present context, the deep calibration.
	
In order to assess the importance of allowing for jumps in the CBITCL specification, we compared the specification described in Section \ref{sec:specification} with a continuous-path model where the L\'evy processes $L^1$ and $L^2$ are simply given by two independent Brownian motions and the CBI processes $X^1$ and $X^2$ are standard square-root diffusions. This simplified model results in a Heston-type model. For the comparison, we calibrated both models to the same set of market implied volatilites, employing the standard calibration technique described in Section \ref{sec:calibrationtwotypes}. Due to the much simpler structure of the model (in particular, of the associated Riccati equations), the calibration of the Heston-type model requires 311 seconds, while the calibration of the CBITCL model described in Section \ref{sec:specification} required 703 seconds. However, the Heston-type model exhibits a worse fit to market data, achieving a RMSE of 0.1236. We observe that, in spite of the exclusion of the jump components, the calibrated values of the remaining parameters are quite similar across the two different specifications. These findings suggest that the jump components of our CBITCL specification capture some features of market data that cannot be adequately reproduced by continuous-path models.

\begin{table}[t]
		\centering
		\begin{tabular}{|c|c|c||c|c|c|}
			\hline 
			& Standard & Deep & & Standard & Deep \\
			\hline\hline
			$X^1_0$ & $1.1040$ & $1.1106$ & $X^2_0$ & $0.19652$ & $0.18549$ \\
			$\beta^1$ & $0.37721$ & $0.65766$ & $\beta^2$ & $1.7524$ & $1.7782$ \\
			$b^1$  & $0.43082$ & $0.43082$ & $b^2$ & $-0.73467$ & $-0.73467$ \\
			$\sigma^1$ & $2.1473$ & $2.1473$ & $\sigma^2$ & $1.1174$ & $1.1174$ \\
			$\eta^1$ & $1.7208$ & $1.7208$ & $\eta^2$ & $2.1855$ & $2.1855$ \\
			$\theta^1$ & $1.9338$ & $1.9338$ & $\theta^2$ & $0.65273$ & $0.65273$ \\
			$\alpha^1$ & $1.1697$ & $1.1697$ & $\alpha^2$ & $1.1122$ & $1.1122$ \\
			$\beta^1_Z$ & $-0.16220$ & $-0.16220$ & $\beta^2_Z$ &$0.88065$ & $0.88065$ \\
			$G^1$ & $3.0313$ & $3.0313$ & $G^2$ & $0.59711$ & $0.59711$ \\
			$M^1$ & $0.79529$ & $0.79529$ & $M^2$ & $0.22821$ & $0.22821$ \\
			$Y^1$ & $1.7675$ & $1.7675$ & $Y^2$ & $1.2390$ & $1.2390$ \\
			$\zeta_{JPY,1}$ & $1.12323$ & $1.12366$ & $\zeta_{JPY,2}$ & $0.232636$ & $0.232636$ \\
			$\zeta_{USD,1}$ & $0.27244$ & $0.27244$ & $\zeta_{USD,2}$ & $0.092184$ & $0.060470$ \\
			$\zeta_{EUR,1}$ & $0.089747$ & $0.097352$ & $\zeta_{EUR,2}$ & $0.025973$ & $0.024422$ \\
			$\lambda_{JPY,1}$ & $0.39764$ & $0.39764$ & $\lambda_{JPY,2}$ & $0.11410$ & $0.11410$ \\
			$\lambda_{USD,1}$ & $0.32863$ & $0.32863$ & $\lambda_{USD,2}$ & $-0.014839$ & $-0.014839$ \\
			$\lambda_{EUR,1}$ & $0.16260$ & $0.16260$ & $\lambda_{EUR,2}$ & $0.040496$ & $0.040496$ \\
			\hline
		\end{tabular} 
		\caption{Calibrated values of the model parameters. }
		\label{table:fxcalibratedparameters}
	\end{table}
	
	\begin{figure}[p]
		\begin{subfigure}{\textwidth}
			\centering
			\includegraphics[scale=0.45]{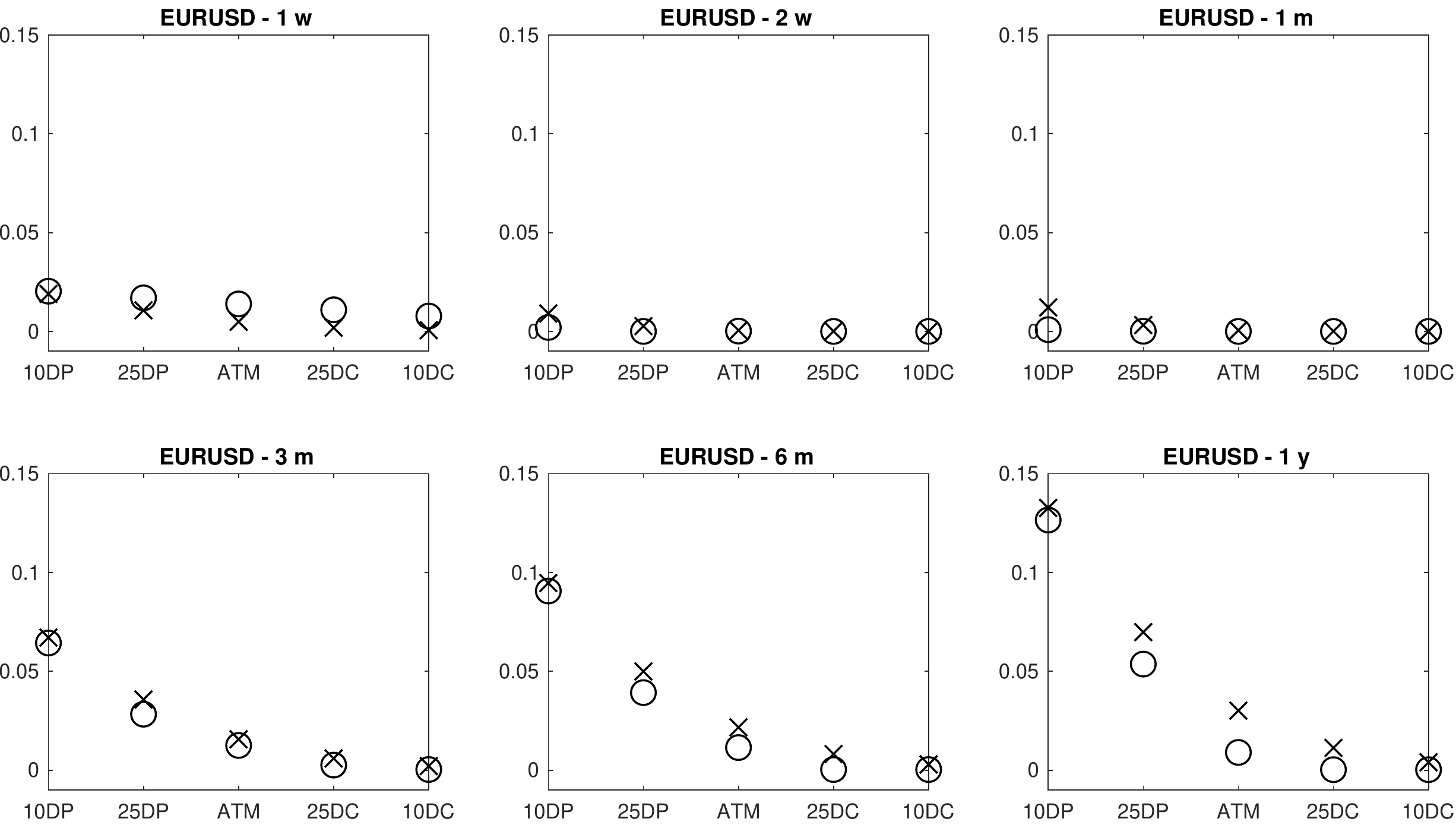}
		\end{subfigure}
		\begin{subfigure}{\textwidth}
			\centering
			\includegraphics[scale=0.45]{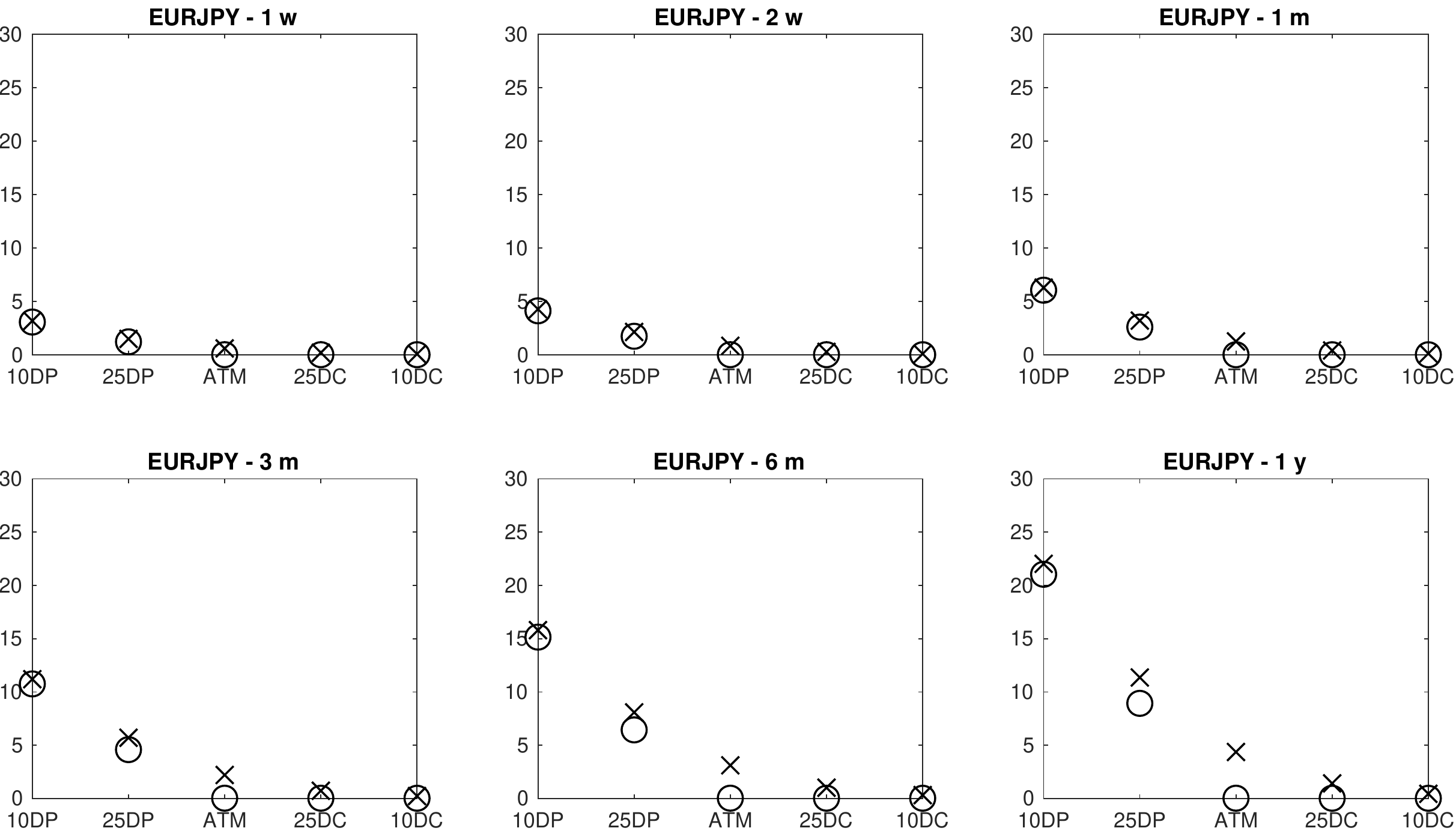}
		\end{subfigure}
		\begin{subfigure}{\textwidth}
			\centering
			\includegraphics[scale=0.45]{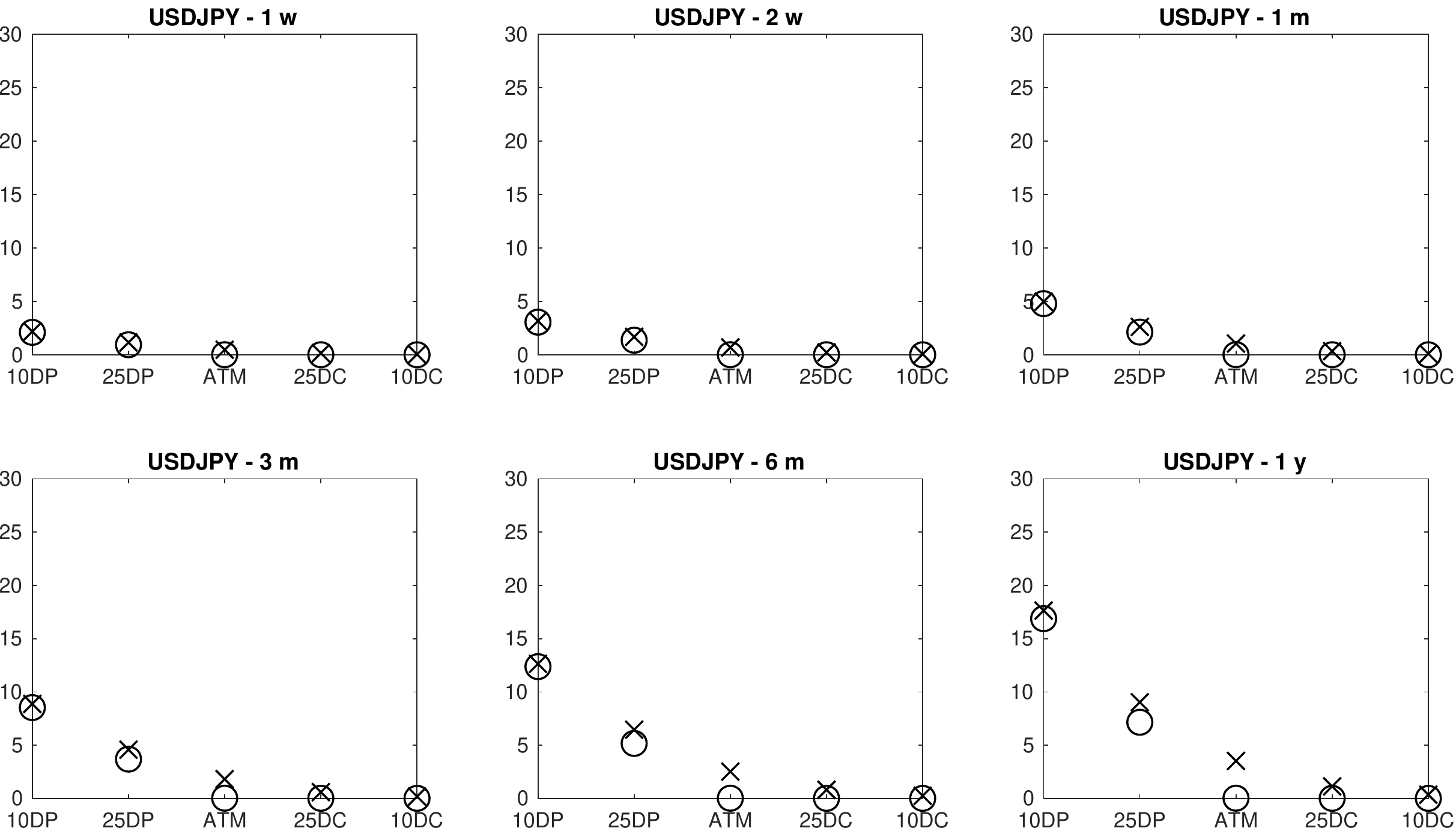}
		\end{subfigure}
		\caption{Calibration results obtained via the standard calibration. Market prices are denoted by crosses, model prices are denoted by circles. Moneyness levels follow the standard Delta quoting convention in the FX option market. DC and DP stand for ``delta call'' and ``delta put'', respectively.}
		\label{fig:cgmystandard}
	\end{figure}
	
	\begin{figure}[p]
		\begin{subfigure}{\textwidth}
			\centering
			\includegraphics[scale=0.45]{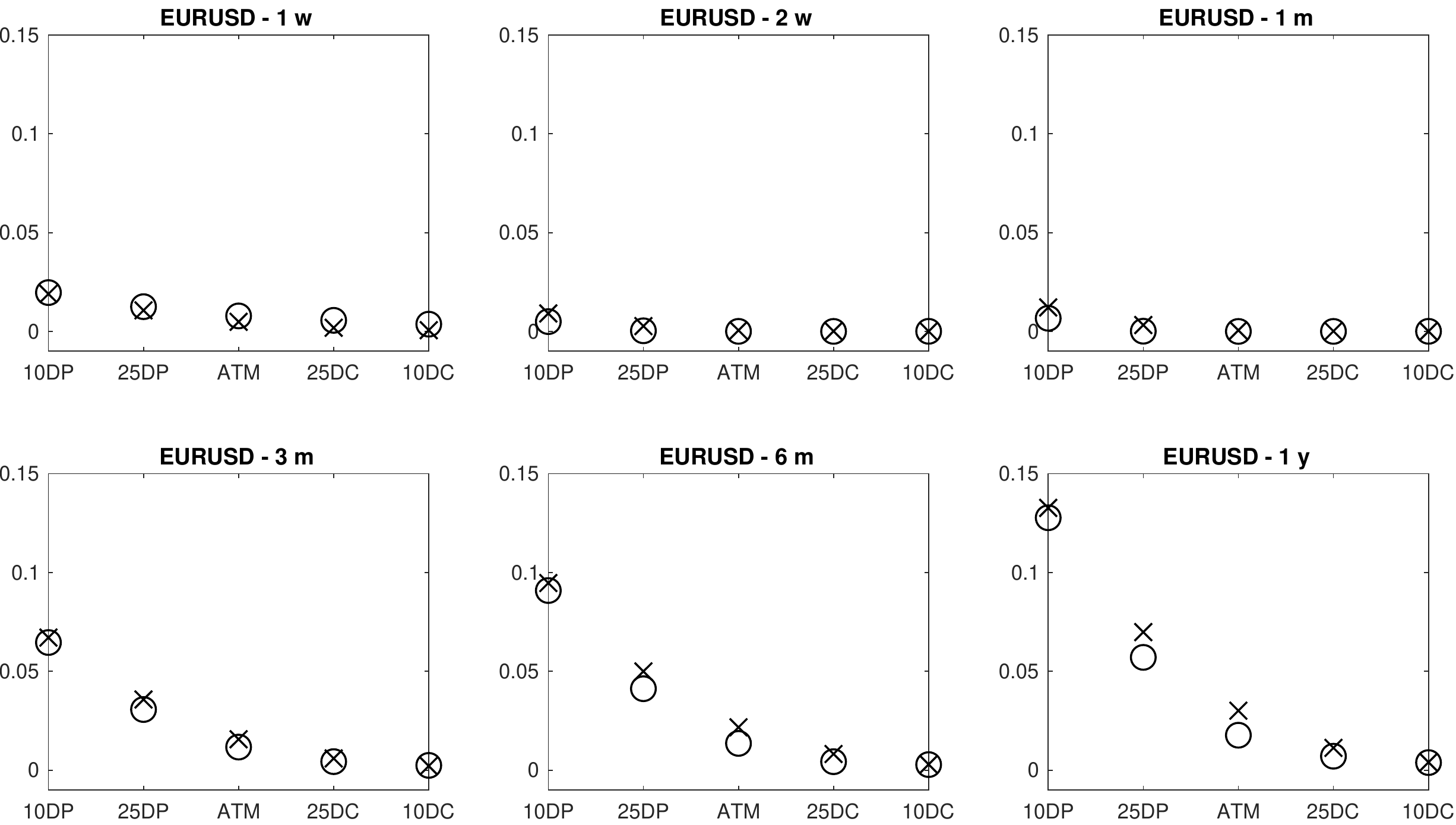}
		\end{subfigure}
		\begin{subfigure}{\textwidth}
			\centering
			\includegraphics[scale=0.45]{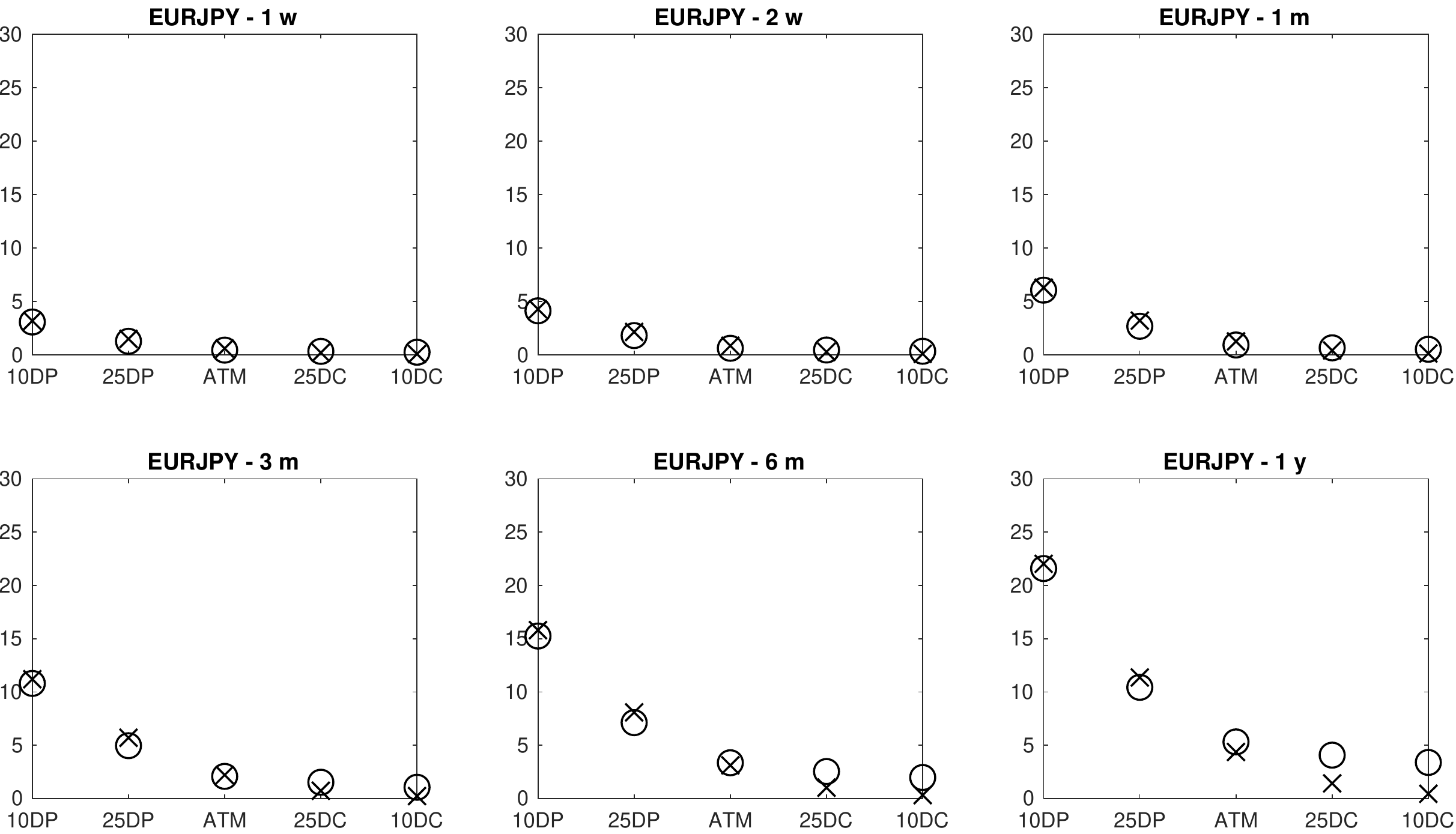}
		\end{subfigure}
		\begin{subfigure}{\textwidth}
			\centering
			\includegraphics[scale=0.45]{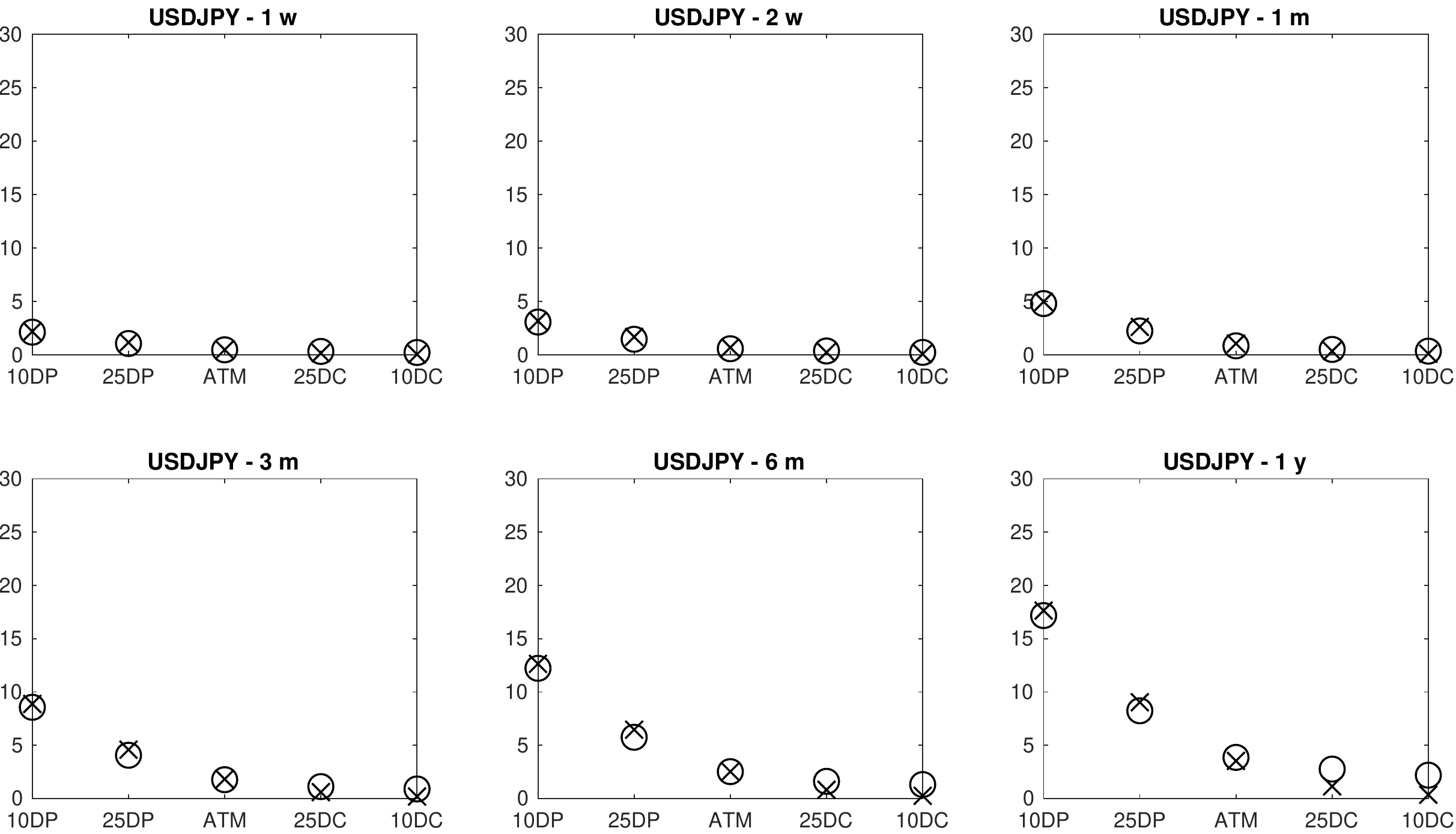}
		\end{subfigure}
		\caption{Calibration results obtained via the deep calibration. Market prices are denoted by crosses, model prices are denoted by circles. Moneyness levels follow the standard Delta quoting convention in the FX option market. DC and DP stand for ``delta call'' and ``delta put'', respectively.}
		\label{fig:cgmydeep}
	\end{figure} 

\section{Conclusions}\label{sec:conclusioncbitclcurrency}
	
We have proposed a stochastic volatility modeling framework for multiple currencies based on CBI-time-changed L\'evy processes (CBITCL processes). The proposed approach combines full analytical tractability with consistency with the symmetric structure and the most relevant risk characteristics of FX markets. In particular, the self-exciting behavior of CBI processes allows capturing jump and volatility clustering effects. We have characterized a class of risk-neutral measures that leave invariant the structure of the model and allow for the derivation of a semi-closed pricing formula for currency options.
Considering a specification driven by tempered $\alpha$-stable CBI processes and CGMY L\'evy processes, we have successfully calibrated the model to  an FX triangle, using standard as well as deep learning techniques. The calibrated values of the parameters support the relevance of self-excitation and clustering phenomena.

Among the possible directions for further research, the modeling framework can be extended by considering stochastic interest rates in the different economies, possibly stochastically correlated with the FX rates. Moreover, we believe that CBITCL processes represent a flexible tool that can be successfully applied to other asset classes where stochastic volatility plays a relevant role.

\bibliographystyle{alpha}
\bibliography{biblio_FX}

\end{document}